\theoremstyle{remark}
\newtheorem{definition}{Definition}
\newtheorem{theorem}{Theorem}[section]
\newtheorem{lemma}{Lemma}[section]
\newtheorem{property}{Property}
\newtheorem{remark}{Remark}
\newtheorem{assumption}{Assumption}
\newcommand{\eat}[1]{}
\newcommand{\stitle}[1]{\vspace{0.0ex}\noindent{\bf #1}}
\newlength\myindent
\newcommand\bindent{%
  \begingroup
  \setlength{\itemindent}{\myindent}
  \addtolength{\algorithmicindent}{\myindent}
}
\newcommand\eindent{\endgroup}
\def\0{\mathbf{0}}
\def\1{\mathds{1}}
\def\a{\mathbf{a}}
\def\Abf{\mathbf{A}}
\def\b{\mathbf{b}}
\def\Bbf{\mathbf{B}}
\def\c{\mathbf{c}}
\def\d{\mathbf{d}}
\def\e{\mathbf{e}}
\def\f{\mathbf{f}}
\def\g{\mathbf{g}}
\def\Gbf{\mathbf{G}}
\def\Hbf{\mathbf{H}}
\def\I{\mathbf{I}}
\def\Obf{\mathcal{O}}
\def\Pd{\mathcal{P}}
\def\S{\mathcal{S}}
\def\u{\mathbf{u}}
\def\vv{\mathbf{v}}
\def\w{\mathbf{w}}
\def\y{\mathbf{y}}
\def\x{\mathbf{x}}
\def\Xbf{\mathbf{X}}
\def\z{\mathbf{z}}
\def\one{{\bf 1}}
\title{Coded Elastic Computing}
\author{
  Yaoqing Yang, Pulkit Grover, Soummya Kar\\
  Carnegie Mellon University\\
  \texttt{\{yyaoqing, pgrover, soummyak\}@andrew.cmu.edu} \\
   \And
   Matteo Interlandi, Saeed Amizadeh, Markus Weimer \\
   Microsoft \\
   \texttt{\{mainterl, saamizad, mweimer\}@microsoft.com} \thanks{Some preliminary results of the paper have been presented at the Workshop on Systems for ML and Open Source Software at NeurIPS 2018 (without conference proceedings). An updated conference version will appear in ISIT 2019.}\\
}
\begin{document}

\maketitle

\begin{abstract}
Cloud providers have recently introduced new offerings whereby spare computing resources are accessible at discounts compared to on-demand computing. Exploiting such opportunity is challenging inasmuch as such resources are accessed with low-priority and therefore can \emph{elastically} leave (through \emph{preemption}) and join the computation at any time. In this paper, we design a new technique called \emph{coded elastic computing} enabling distributed computations over elastic resources. The proposed technique allows machines to leave the computation without sacrificing the algorithm-level performance, and, at the same time, adaptively reduce the workload at existing machines when new ones join the computation. Leveraging coded redundancy, our approach is able to achieve similar computational cost as the original (uncoded) method when all machines are present; the cost gracefully increases when machines are preempted and reduces when machines join. The performance of the proposed technique is evaluated on matrix-vector multiplication and linear regression tasks. In experimental validations, it can achieve exactly the same numerical result as the noiseless computation, while reducing the computation time by 46\% when compared to non-adaptive coding schemes.
\end{abstract}

\section{Introduction}

New offerings from cloud-service providers allow exploiting under-utilized Virtual Machines (VMs) at a fraction of the original cost~\cite{azure-batch,aws-spot}. For example, Azure Batch provides low-priority machines at about one-fifth of the cost of ordinary virtual machines~\cite{azure-batch-price}.
Similarly, Amazon Spot Instances provide machines at market price with a discount which can reach up to 90\% with respect to regular on-demand prices~\cite{amazon-spot}.
Such offerings, however, have the drawback that machines can be preempted at any time if a high-priority job appears.
This, in turn, will surface as a computation failure at the application level.
While common distributed machine learning frameworks are already built with fault-tolerance~\cite{mllib,tensorflow}, they often assume that failures are transient and rare.
Due to this assumption, machine failures are often recovered by a ``stop-the-world'' scheme whereby the entire system is forced to wait until regular execution on the failure machines are restored from previous state (eventually on new machines).
The above assumptions, however, do not necessarily hold for failures due to machines being preempted because (1) these failures are permanent and local data may not be accessible anymore; (2) several machines can be preempted altogether (up to 90\%~\cite{qoop}); (3) these failures add up to transient failures, therefore leading to more frequent disruptions during computation; and (4) the computational framework may need to acquire additional machines to compensate, meaning that data has to be copied on the new machines which will likely become stragglers for the running computation.
In practice, we observed situations at scale where the stop-the-world scheme results in zero computation progress because, by the time a failure is recovered, a new failure occurs. This results in the necessity to build an elastic run-time framework \cite{narayanamurthy2013towards,reef} and related failure-aware algorithms which can continue the computation and flexibly adapt in the presence of failures.
Another possible technique to deal with preemption is to view the preempted machines as erasure-type faults and ignore them. Although machine learning algorithms are robust to small transient faults, simply ignoring the computational results in these permanently-failed preempted machines may result in algorithmic-level performance loss~\cite{narayanamurthy2013towards}.
The influence of ignoring the computation results for the preemption type of faults is also more severe than usual computation faults because the number of failures can be really large~\cite{qoop}.
Similarly, even if the data are redundant in some applications, ignoring partial results may still lead to reduced confidence levels on the prediction accuracy. It may also not be desirable from a customer's perspective who often requires the full dataset to be present during the entire training process in order to achieve the highest accuracy.

In order to deal with the aforementioned problems, in this paper we propose \emph{coded elastic computing}: a novel distributed learning framework allowing users to train their machine learning models over preemptable machines.
In our coded elastic computing framework machines are allowed to arbitrarily join or leave during a distributed iterative learning task thanks to the introduction of redundancy in the computation.
Although coded elastic computing introduces redundancy using ideas from error-correcting codes \cite{lee2018speeding,GC2,dutta2016short,Salman1,yu2017polynomial,Suhas1,yang2017NIPS,fahim2017optimal, dutta2018optimal}, and can let the computation continue when a preemption failure happens like ordinary error-correcting codes, the way it utilizes the coded data to deal with preemption is fundamentally different from existing works. More specifically, coded elastic computing can flexibly change the workload of each machine at runtime based on the number of available machines by selecting to use only a subset of the encoded data in a cyclic fashion. Apart from providing fault-tolerance when machines are preempted, coded elastic computing is also \emph{positively-elastic} in that it can utilize the properties of the coded data to reduce the workload at existing machines flexibly when new machines join the computation. We will show that the coded elastic computing framework can make the computational cost at each machine scale inversely with the number of machines, which leads to linear scaling of theoretical computational cost. The proposed technique is also useful in other applications besides elastic computation when the number of machines needs to be dynamically adjusted during a learning task, such as when the number of machines is tuned as a hyper-parameter, or when the machines have to be reallocated to achieve fairness \cite{hindman2011mesos} between users or the specific need of some users at runtime.

The proposed coded elastic computing technique is tested in the multi-tenancy cluster at Microsoft as an example of the Apache REEF EGC (Elastic Group Communication) framework. Apache REEF is a library that helps develop distributed high-performance applications on top of cluster resource managers such as YARN \cite{vavilapalli2013apache} and Mesos \cite{hindman2011mesos}.
The Apache REEF project provides a set of abstractions and reusable functional blocks to ease the process of building cloud-scale applications. EGC is a distributed communication framework which extends Apache REEF by providing an API allowing to implement elastic computations by chaining fault-tolerant MPI-like primitives. While fault-tolerant distributed applications are usually conducted using the well-known MapReduce-style computation model based on data shuffling~\cite{dean2008mapreduce,mllib}, recent trends in machine learning show that MPI-based computation~\cite{horovod} provides better performance with respect to the former. Nevertheless, MPI-based applications are in general not fault tolerant, whereas EGC tries to unify the benefits of the two worlds. Based on the EGC framework, we test the proposed technique for a coded implementation of linear regression on a real dataset when machines can leave and join the computation. We show that the current technique can obtain the same convergence behavior as ordinary gradient-descent-based algorithms but can elastically allocate the work load based on the number of available machines without moving data at the existing machines. We also compare with other baselines, such as ignore, replication and an existing algorithm called Elastic Distr-BGD \cite{narayanamurthy2013towards} to show the improvement of the proposed technique in terms of the model generalization error.

In this paper, we first present a coded elastic matrix-vector multiplication algorithm to illustrate the main idea. Then, we present the generalizations of the proposed technique in broader applications, namely matrix-matrix multiplication, linear regression, master-free fully-distributed computing, and the training of deep neural networks (see Section \ref{sec:more_applications}). Finally, we validate the approach with a set of experiments. The proposed technique achieves the exact computation result as the noiseless computation (where the number of machines remain fixed), while adaptively changing the workload at each machine. The observed speedup compared to a non-adaptive coded computing baseline can reach up to 50\%. The contributions of the paper are thus summarized in the following:
\begin{itemize}
    \item We formalize the preemption problem in elastic computing and propose a coded elastic data-partitioning framework to deal with it.
    \item We design a computing technique that can adapt the workload at each machine in the presence of elastic events, without moving data at existing machines.
    \item We test the proposed technique using the Apache REEF EGC framework for linear regression and show advantage on real datasets over multiple baselines.
\end{itemize}

\section{Resource-Elastic Coded Distributed Computing}

In this section, we formally define computation elasticity (see Section~\ref{sec:defitnition}), and present coded elastic techniques. We initially focus on the problem of matrix-vector multiplication for having a better theoretical understanding of coded elastic computing. 
Before presenting the algorithm in Section \ref{sec:algorithm}, we introduce the main idea of the paper in Section \ref{sec:coded_partitioning} and Section \ref{sec:coded_elastic_partitioning}. Then, in Section~\ref{sec:achievable} and Section~\ref{sec:lower_bound}, we analyze the proposed techniques and prove that they are indeed elastic according to our definition.

\subsection{Definition of Computation Elasticity}\label{sec:defitnition}

We characterize \emph{elastic events} whereby existing machines can be preempted, and new machines can be added to the computation. A preemption means that machines are taken away and the local data is lost. To formalize the notion of elastic events that we adopt in this paper, we state the following properties that are characteristic of these events:

\begin{property}\label{prop:elasticity_1}
	Which machine(s) to be preempted is decided by the resource allocator and is not known in advance.
\end{property}

\begin{property}\label{prop:elasticity_2}
	The preemption is permanent. However, new machines may join after some unknown time.
\end{property}

\begin{property}\label{prop:elasticity_3}
	If some machines leave or join, the other machines know immediately about which machines leave or join.
\end{property}

The second and third properties differentiate the elastic events from more commonly considered issues of faults and stragglers because (1) new machines can join the computation, and (2) one may adapt the computation scheme instantly after an elastic event and utilize the newly available resources.

Consider the case when a data matrix $\Xbf$ is stored distributedly. Denote by $P$ the initial number of machines. Denote by $P_\text{max}$ the maximum possible number of machines which equals to the overall size of the universe of machines handled by a cluster scheduler. A configuration point is a tuple $(n,m)$ in which $n$ is the number of machines and $m$ is the memory size. Note that $n$ changes over time, e.g., $n$ can exceed $P$. A computation policy of a task at a given configuration point $(n,m)$ is said to be achievable if the policy completes the task using $n$ machines, each with size $m$. A policy is said to be \emph{optimal} if it obtains the optimal tuple $(e,u)$ simultaneously, where $e$ is the number of machine preemptions that can be tolerated (i.e., the exact result can be computed even if there are $e$ preemptions), and $u$ is the size of data that a machine actually selects to use ($u\le m$). This means the memory of size $m$ can be used to store (coded) data, but during computation, we only access a part of it. Note that the optimal tuple ($e^*$,$u^*$) depends on ($n$,$m$). For a given configuration $(n,m)$, we want to minimize $u$ to reduce the memory access time and maximize $e$, the number of tolerable machine failures. Denote by $\mathcal{A}_{n,m}$ the set of computing policies, and $\mathcal{A}^*_{n,m}\subset \mathcal{A}_{n,m}$ the set of optimal computing policies that obtain the optimal tuple $(e^*,u^*)$. One may think that the optimal tuple $(e^*,u^*)$ may not be unique. For example, there may exist a computing scheme that achieves $(e,u)$ such that $e>e^*$ and $u>u*$. Note that our definition of $(e^*,u^*)$ is not in the sense of Pareto optimality. The $e^*$ and $u^*$ are information-theoretical optimal values that are obtained separately. If there is no scheme that achieves the optimal $e^*$ and $u^*$ at the same time, then in this case, $\mathcal{A}^*_{n,m}$ is an empty set.

\begin{definition}(transition compatibility)
A pair of policies ($a$, $a'$) with $a\in \mathcal{A}_{n,m}$ and $a'\in \mathcal{A}_{n',m'}$ is said to be transition compatible if the policy $a$ can be transitioned to policy
$a'$ without having to move or modify the data in the existing machines in the event of a configuration transition ($n$,$m$)$\to$($n'$,$m'$).
\end{definition}
The key to the above definition is that we discourage inter-machine data movement in order to make the elastic configuration transitions non-disruptive to ongoing computation tasks.
\begin{definition}\label{def:transition_family}(fully transition compatibility)
A family of policies $\mathcal{F} = \{a_{n,m}\}$, $n\in \mathcal{N}$, $m\in \mathcal{M}$, is said to be fully transition compatible if every pair ($a_{n,m}$, $a_{n',m'}$) in $\mathcal{F}$ are transition compatible.
\end{definition}

\begin{definition}(optimal fully transition compatibility)\label{def:transition_family_optimal}
A family of fully transition compatible policies $\mathcal{F} = \{a_{n,m}\}$, $n\in \mathcal{N}$, $m\in \mathcal{M}$ is said to be optimal if all policies are optimal, i.e., each policy $a_{n,m}\in \mathcal{F}$ is in $\mathcal{A}_{n,m}^*$ and obtains the optimal tuple $(e^*,u^*)$ for the number of tolerable machine preemptions and the size of the selected data to use.
\end{definition}

Our goal is thus to find conditions under which an optimal fully transition compatible family exists, and to provide explicit characterizations of the transition compatible families. We now present matrix-vector multiplication techniques that can provide a fully transition compatible family of optimal computation policies with fixed memory cost at each machine, i.e., when $m$ is fixed in different transition compatible policies $a_{n,m}$.

\subsection{Coded Data Partitioning in the Presence of Preempted Machines}\label{sec:coded_partitioning}

Assume that in the worst-case of preemption failures, there are at least $L$ machines that remain\footnote{The parameter $L$, or a lower bound of $L$, is needed for exact computation. However, in many machine learning tasks, one can often optimize with a subset of data due to data redundancy. In that case, knowledge of $L$ is not necessary.}. We will show that the parameter $L$ is also equal to the \emph{recovery threshold}, the meaning of which will be clear in this section (Section \ref{sec:coded_partitioning}). In this paper, we consider repeatedly using the same data but with different input vectors. For matrix multiplications, it means that we compute $\Xbf\w_t$ for $t=1,2,\ldots$ for the same $\Xbf$. This computation primitive is applicable in a variety of scenarios, including training linear models \cite{yang2018coded,haddadpour2018straggler}, PageRank \cite{yang2017NIPS}, model-parallel deep neural networks \cite{dutta2018DNN1,dutta2018DNN2} and many machine learning algorithms at the inference stage.

We partition the data matrix $\Xbf$ into $L$ subsets (or equivalently, submatrices obtained by row-wise partition) $\Xbf_1,\Xbf_2,\ldots,\Xbf_L$ of equal size. If the total number of data points is not divisible by $L$, we can use zero-padding. The generator matrix $\Gbf_{P_\text{max}\times L} = (g_{s,k}), s=1,\ldots P_\text{max}$ is predetermined. We initially generate $P$ \emph{coded} data matrices $\Xbf_s^\text{coded},s=1,2,\ldots,P$, $(P>L)$, in which each matrix is a linear combination of the form:
\begin{equation}\label{eqn:linear_combination}
    \Xbf_s^\text{coded} = \sum_{k=1}^{L} g_{s,k} \Xbf_k
\end{equation}
where each $g_{s,k}$ is a random but predetermined coefficient. The $P$ coded data matrices are distributed to $P$ workers. When the number of machines exceeds $P$, we generate new linear combinations based on $\Gbf_{P_\text{max}\times L}$. The generation can be done before the computation, and the data can be stored in the cloud which has a much larger size than the local fast memory.
\begin{lemma}\label{prop:LoutofP}
Suppose we want to compute the matrix-vector product $\Xbf\w$. Suppose the matrix $\Gbf_{P_\text{max}\times L} = (g_{s,k})$ satisfies the property that any $L\times L$ submatrix of $\Gbf_{P_\text{max}\times L}$ is full-rank. Then, any $L$ out of $n$ coded computation results obtained at the $n$ available machines $\Xbf_s^\text{coded}\w,s=1,2,\ldots,n$ are sufficient to recover the original (uncoded) computation results $\Xbf\w$ regardless of the current number of machines $n$.
\end{lemma}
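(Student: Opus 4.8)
The plan is to exploit the fact that the coded computation results at the $n$ machines are themselves linear combinations of the $L$ uncoded results $\Xbf_1\w, \ldots, \Xbf_L\w$, governed by rows of $\Gbf_{P_\text{max}\times L}$, and then invert an $L\times L$ subsystem using the stated full-rank property.

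First I would stack the $L$ uncoded results into a single block vector. Writing $\y_k = \Xbf_k \w \in \R^{d}$ where $d$ is the common number of columns (rows of the partition), form $\Ybf = [\y_1^\top, \ldots, \y_L^\top]^\top$. By the linearity in \eqref{eqn:linear_combination}, the coded result at machine $s$ is $\Xbf_s^\text{coded}\w = \sum_{k=1}^L g_{s,k} \Xbf_k \w = \sum_{k=1}^L g_{s,k}\y_k$, i.e. it is exactly the $s$-th "row" of $\Gbf_{P_\text{max}\times L}$ acting on the block vector $\Ybf$ (a Kronecker-product action $(\Gbf \otimes \I_d)\Ybf$ if one wants to be fully formal, but this is not needed — one can argue coordinate-by-coordinate). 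Crucially, this identity holds for whatever $\w$ is supplied, and the coefficients $g_{s,k}$ are fixed in advance and independent of the set of available machines.

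Next I would fix any subset $S \subseteq \{1,\ldots,n\}$ of size $|S| = L$ of machines whose results are available, and let $\Gbf_S$ be the $L\times L$ submatrix of $\Gbf_{P_\text{max}\times L}$ formed by the rows indexed by $S$. (Here I use that the $n$ machines currently in use draw their coefficients from distinct rows of $\Gbf_{P_\text{max}\times L}$, so $S$ indexes $L$ distinct rows; when $n$ grows beyond $P$, new machines simply take further fresh rows, so this remains true — this is the one place where the phrasing "regardless of the current number of machines $n$" is used.) The received results then satisfy the linear system: the vector of $L$ coded outputs equals $\Gbf_S$ applied to $\Ybf$ blockwise. By hypothesis every $L\times L$ submatrix of $\Gbf_{P_\text{max}\times L}$ is full-rank, so $\Gbf_S$ is invertible; multiplying by $\Gbf_S^{-1}$ recovers $\y_1,\ldots,\y_L$, i.e. all of $\Xbf_1\w,\ldots,\Xbf_L\w$, and concatenating these rows reconstructs $\Xbf\w$ exactly. (If zero-padding was used to make the number of data points divisible by $L$, one discards the padded rows at the end.)

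I do not expect a genuine obstacle here: the statement is essentially the defining property of an MDS-type code applied blockwise, and the only subtlety to state carefully is the bookkeeping in the previous paragraph — namely that "any $L$ machines" really does correspond to "any $L$ rows of $\Gbf_{P_\text{max}\times L}$," so that the full-rank hypothesis applies verbatim. A secondary point worth a sentence is that the decoding operation $\Gbf_S^{-1}$ depends only on which machines are present (known immediately by Property~\ref{prop:elasticity_3}) and not on $\w$, so the same recovery works for every input vector $\w_t$, $t=1,2,\ldots$, which is exactly the regime the paper targets.
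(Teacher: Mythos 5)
Your proposal is correct and follows essentially the same route as the paper, which also recovers $\Xbf\w$ by solving the $L$ linear equations $\Xbf_s^\text{coded}\w = \sum_{k=1}^{L} g_{s,k}\Xbf_k\w$ for the $L$ surviving machines and invoking the full-rank property of the corresponding $L\times L$ submatrix of $\Gbf_{P_\text{max}\times L}$. Your write-up merely makes the blockwise inversion and the row-bookkeeping explicit, which is a fine elaboration rather than a different argument.
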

The rank condition in the lemma can be satisfied by a variety of choices of linear coefficients $g_{s,k}$, e.g., if $g_{s,k}$'s are i.i.d. Gaussian random variables. The recovery of the results is through solving $L$ linear systems of the form $\Xbf_s^\text{coded}\w = \sum_{k=1}^{L} g_{s,k} \Xbf_k\w$ for the $L$ different machines that successfully finish the computation. Lemma \ref{prop:LoutofP} is critical for the failure recovery. It essentially shows that no matter which machines are preempted, as long as the number of remaining machines is not smaller than $L$, the \emph{whole information} of the original data is preserved in the remaining machines, and the computation results can be recovered. This is why we call the parameter $L$ the \emph{recovery threshold}. The parameter $L$ is limited by the storage constraint at each machine. The more redundancy we can add to the data, the lower recovery threshold we need, and hence more failures we can tolerate. In our experiments, we consider the case when $P=P_\text{max}$, i.e., the maximum number of machines is the same as the overall number of machines, and use a redundancy factor of $P/L$ = 2. Therefore, we can at maximum tolerate failures when half of the machines are preempted. Note that this is not necessary because one can generate more data and store the extra data on the cloud so that the number of machines $n$ can exceed the initial number of machines $P$. An often-used coding technique is called \emph{systematic code}, in which the linear coefficients satisfy
\begin{equation}\label{eqn:systematic}
    g_{s,k}=\one_{\{s=k\}},\text{ if }s\le L.
\end{equation}
In this case, the coded data at the first $L$ machines $\Xbf_s^\text{coded},s=1,2,\ldots,L$ are the original data $\Xbf_k ,k=1,2,\ldots,L$. This can provide backward-compatibility to switch between coded computing and other uncoded ordinary computing techniques, and at the same time significantly reduce the cost of encoding the data at the preprocessing stage \eqref{eqn:linear_combination}.

\subsection{Elastic Data Partitioning for Elastic Computation by Using Data In a Cyclic Way}\label{sec:coded_elastic_partitioning}

According to Lemma \ref{prop:LoutofP}, as long as the number of machines that are not preempted is greater or equal to $L$, the remaining data using the coded data partitioning can preserve the whole information of the original data. However, when the number of machines is strictly larger than $L$, it becomes redundant to use all the coded data because the data at $L$ machines already preserve the whole information. One may think that this amount of waste is not significant. However, consider the case when the number of machines gradually increases from $L$ to a large number. In this case, any $L$ machines can provide the correct results, but we are not able to utilize the parallel gain if a fixed coding technique is used. The situation can be partially solved if we use some rateless coded techniques \cite{mallick2018rateless}. See Section~\ref{sec:comparison} for a comparison between the proposed technique and rateless coded computing.

To positively utilize all the remaining machines and achieve the parallel computing capabilities of the extra machines, we select to use data in a cyclic fashion as shown in Figure \ref{fig:elastic_computing}.
\begin{figure}[ht!]
  \centering
  \subfigure[Data encoding without further partitioning]{\includegraphics[scale=0.25]{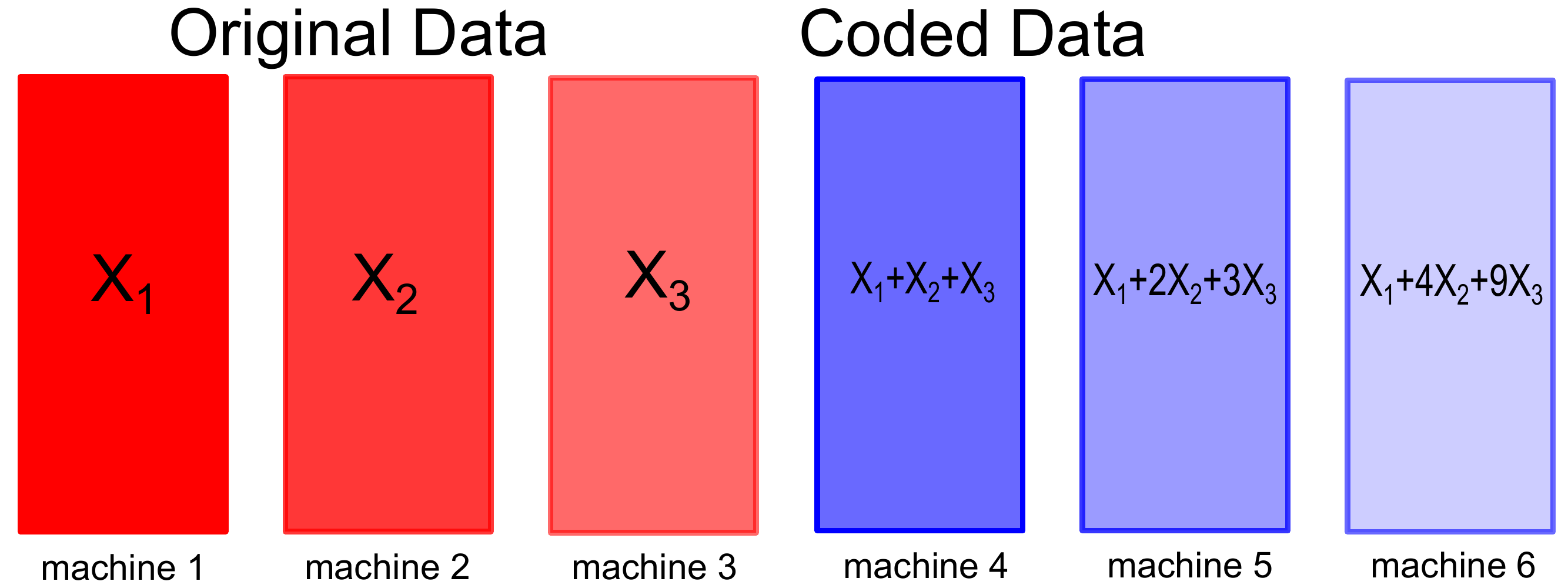}\label{fig:sub1}}\\
  \vspace{-1ex}
  \subfigure[No preempted machine]{\includegraphics[scale=0.25]{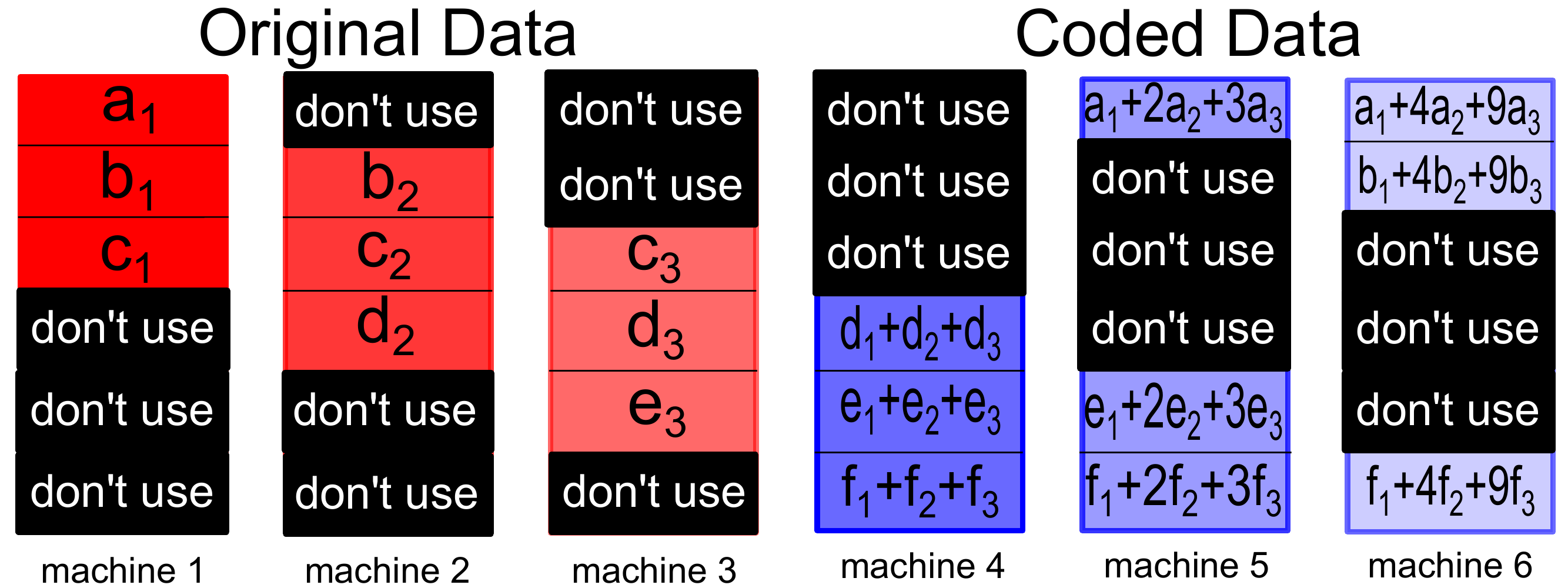}\label{fig:sub2}}\hspace{2ex}
  \subfigure[One preempted machine]{\includegraphics[scale=0.25]{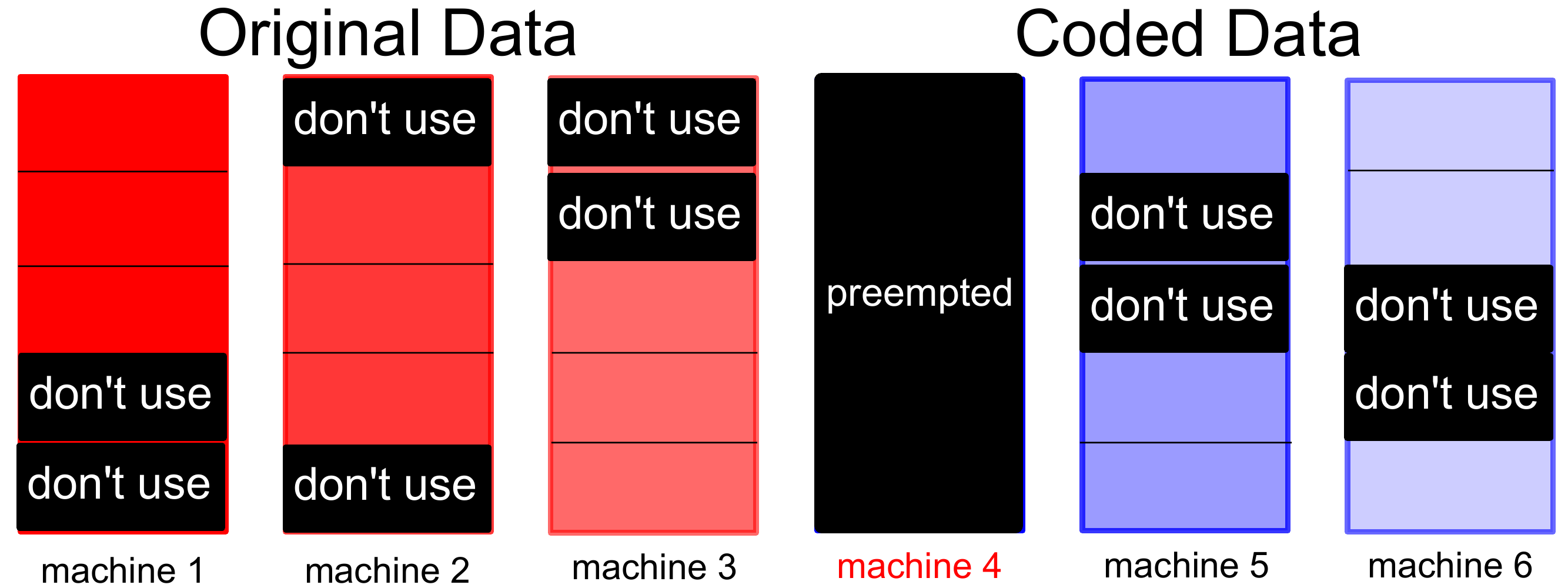}\label{fig:sub3}}\vspace{-1ex}
  \subfigure[Two preempted machines]{\includegraphics[scale=0.25]{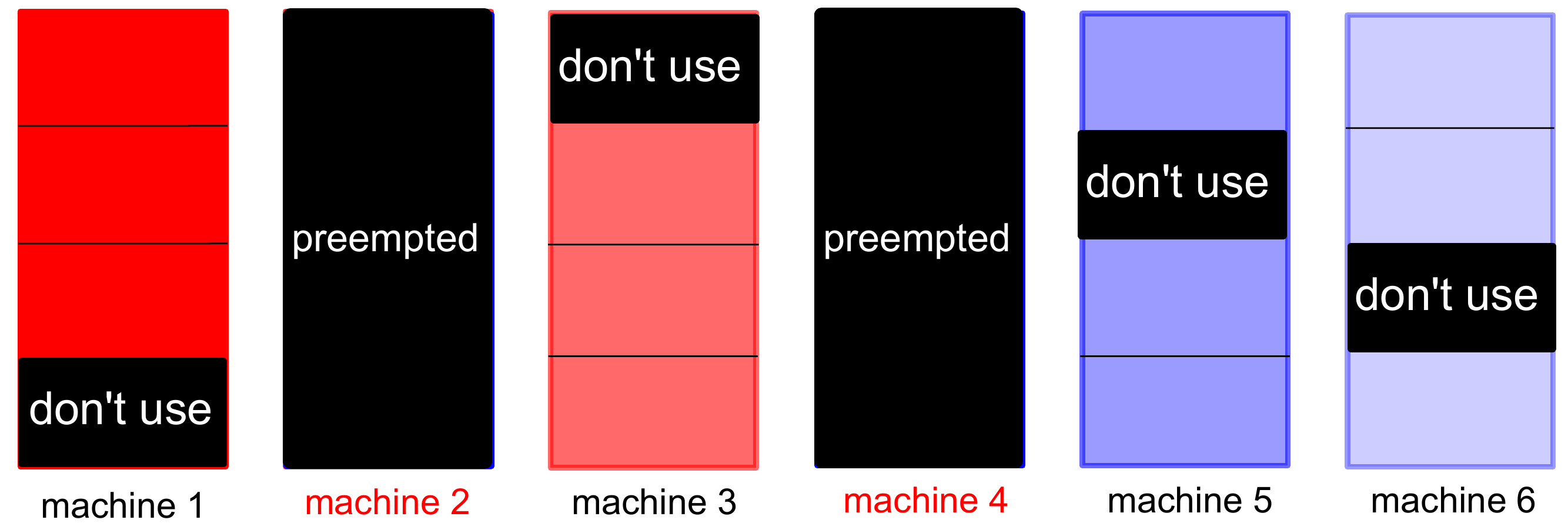}\label{fig:sub4}}\hspace{2ex}
  \subfigure[Three preempted machines]{\includegraphics[scale=0.25]{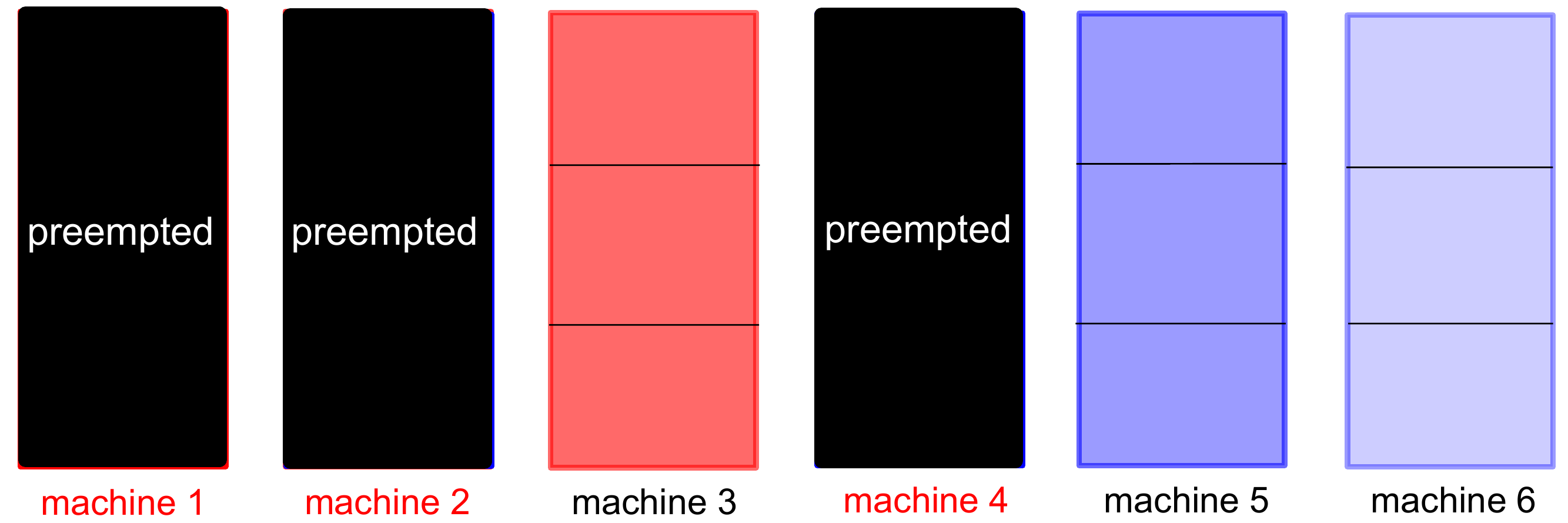}\label{fig:sub5}}\vspace{-2ex}
  \caption{The main idea of elastic data partitioning is to use the data in a cyclic way. The black sub-blocks with ``don't use'' mean that the data is stored but not accessed. Each column of data is stored at one machine. For each group (i.e., row block) of data at different machines, there is enough number of sub-blocks that contain all the information. This cyclic way of using data leads to linear scaling of the per-machine computational cost in the number of machines.\vspace{-2mm}} \label{fig:elastic_computing}
\end{figure}
We use a systematic code (see \eqref{eqn:systematic}) by which the first $L$ of the $P$ coded blocks $\Xbf_s^\text{coded},s=1,2,\ldots,P$ are the original data $\Xbf_k ,k=1,2,\ldots,L$. In Figure~\ref{fig:sub1} we use red to denote original data and blue to denote the remaining coded data. In this example, the initial number of machines is $P = 6$ and the recover threshold is $L=3$. From figure \ref{fig:sub2} to \ref{fig:sub5}, we show how to continue the computation when machines are gradually preempted from 6 to 3 (machines correspond to the columns). The stored data remains fixed i.e., the same way as in Figure~\ref{fig:sub1}, but we further partition the data into smaller blocks and only select part of the data to use. Each machine is initially allocated a single subset of coded data $\Xbf_s^\text{coded},s=1,2,\ldots,P$, among which $L$ subsets are the original data. Each subset of data is represented as a column in any subfigure of Figure \ref{fig:elastic_computing}.

If no failures occur (see Figure \ref{fig:sub2}), to remove redundancy from the data, we partition each data block (column) into $P$ \emph{sub-blocks}, and let each machine only use $L$ out of $P$ sub-blocks. By a sub-block of data, we mean one small rectangle in Figure~\ref{fig:sub2}. When there are $n$ machines and $n\neq P$ (see Figure \ref{fig:sub3}-\ref{fig:sub5}), we partition each data block into $n$ sub-blocks, and still let each node only use $L$ out of $n$ sub-blocks. If new machines join, they download the coded data previously used in some failed machines or some new linearly combined data based on $\Gbf_{P_\text{max}\times L}$. All the machines, including the ones that just join, use elastic data partitioning based on the current number of available machines $n$.

There are two advantages of this type of data usage (1) the overall selected data to use is of the same size as the original data and the selected data across all remaining machines have the same size; and (2) the selected data preserve the whole information of the original data, according to Lemma~\ref{prop:LoutofP}. Thus, we can exactly recover the results while removing the redundancy in the way of using data. These two properties will be formally introduced and proved in Theorem~\ref{thm:achievable}. 

\subsection{Coded Elastic Computing for Matrix-vector Multiplications}\label{sec:algorithm}

We provide the detailed procedures of the coded elastic computing algorithm for the repeated matrix-vector multiplication problem $\Xbf\w_t, t=1,2,\ldots$ in Algorithm \ref{alg:coded_elastic_computing}. We use $\Xbf_{k,j}^{\text{coded}}$ to represent the $j$-th sub-block of the data at the $k$-th machine. We will call $\Xbf_{k,j}^{\text{coded}}$'s with the same $j$ ``the $j$-th group'' of sub-blocks which correspond to the $j$-th row block in any subfigure of Figure~\ref{fig:elastic_computing}. Note that the number of row blocks change with the number of preempted machines. We use $\Gbf_j$ to represent the collection of linear combination coefficients for the $j$-th group (row block) that are selected to use. For example, for the first group (row block) in Figure~\ref{fig:sub2}, we have $
\Gbf_1 = \left[\begin{matrix}
1&0&0\\
1&2&3\\
1&4&9
\end{matrix}\right]$
because the three selected sub-blocks are $\a_1$, $\a_1+2\a_2+3\a_3$ and $\a_1+4\a_2+9\a_3$. And for the last group (row block) in Figure~\ref{fig:sub2}, $\Gbf_6 = \left[\begin{matrix}
 1&1&1\\
 1&2&3\\
 1&4&9
 \end{matrix}\right]$, because the three selected sub-blocks are $\f_1+\f_2+\f_3$, $\f_1+2\f_2+3\f_3$, and $\f_1+4\f_2+9\f_3$.

\begin{algorithm}[!h]
   \caption{Coded Elastic Computing for Matrix-Vector Multiplication}\label{alg:coded_elastic_computing}
\begin{algorithmic}
	\STATE {\bfseries Input:} The data matrix $\Xbf$, the number of machines $P$, the recovery threshold $L$, the linear combination coefficients $g_{s,k}$'s in equation \eqref{eqn:linear_combination} and the sequence of input vectors $\w_t$ $t=1,2,\ldots$.
    \STATE {\bfseries Preprocessing:} Partition the data $\Xbf$ into $L$ subsets and compute the coded subsets as in \eqref{eqn:linear_combination}.
    \STATE {\bfseries Online computation:}
    \STATE {\bfseries FOR} each computation with input $\w_t$:
    \bindent
   \STATE {\bfseries Broadcast:} The master node sends $\w_t$ to each worker.
   \STATE {\bfseries FOR} each group index $j$:

    \STATE {\bfseries \qquad Gather:} The $k$-th worker computes $\u_{t, k,j} = \Xbf_{k,j}^{\text{coded}}\w_{t}$ and sends $\u_{t, k, j}$ to the master.
    \STATE{\qquad The master gathers vectors $\u_{t, k, j}$ for all workers that use the $j$-th sub-block and obtains}
    \STATE{\qquad the matrix $\u_{t,j}$ which contains the results for the $j$-th group (row block).}
   \STATE {\bfseries \qquad Decode:} The master node computes $\u_{t,j}\Gbf_j^{-1}$ to obtain the results for the $j$-th group.
	
    \STATE {\bfseries Output:} The master node outputs $\Xbf\w_t$.

    \STATE {\bfseries IF Preemption/New Machines:} Change the selected data to use based on the current number of machines.
\eindent

\end{algorithmic}
\end{algorithm}

\subsection{Analysis of Coded Elastic Computing: Achieving Optimal Fully Transition Compatibility}

According to Definition~\ref{def:transition_family_optimal}, a fully compatible family of policies can support seamless transitions between any pairs of policies in the family. The coded elastic computing scheme provided in Algorithm \ref{alg:coded_elastic_computing} for matrix-vector multiplication gives a family of fully-compatible policies for fixed memory cost at each machine. In Section~\ref{sec:achievable}, we analyze the memory cost and the number of data points to be used at each existing machine. Then, in Section~\ref{sec:lower_bound}, we provide lower bounds on these two quantities and show that Algorithm~\ref{alg:coded_elastic_computing} achieves the optimal memory cost and the size of the selected data. Thus, the coded computing policies in Algorithm~\ref{alg:coded_elastic_computing} is a compatible family of \emph{optimal} policies, in that for the fixed storage size, each policy obtains the largest number of tolerable failures and smallest size of the selected data to use ($e^*$,$u^*$), and provides seamless transitions between each other without moving data at existing machines.

\subsubsection{Upper Bounds on the Storage Cost and the Size of the Slected Data to Use}\label{sec:achievable}
Suppose the original data has $N$ data points and all data points are in $\mathbb{R}^d$, i.e., $\Xbf$ has size $N\times d$. Recall that $P_\text{max}$ is the maximum number of machines, $L$ is the recovery threshold, and $n$ is the number of currently available machines. In the following theorem, by the \emph{size} of the data we mean the number of data points.

\begin{theorem}\label{thm:achievable}
Suppose the number of machines $n$ satisfies $L\le n\le P_\text{max}$ at any time, i.e., it is not smaller than the recovery threshold $L$ and not bigger than the maximum possible number of machines. Then, the coded elastic computing algorithm (Algorithm \ref{alg:coded_elastic_computing}) achieves the exact computation result of $\Xbf\w_t$ for all $t$. The size of the data stored at each machine is $N/L$. The size of the selected data to use at each machine is $N/n$ and is the same across different machines. The overall size of the selected data is the same as the size of the original data.
\end{theorem}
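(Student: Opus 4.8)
The plan is to verify each of the four claims in Theorem~\ref{thm:achievable} in turn, relying on the cyclic data-partitioning construction of Section~\ref{sec:coded_elastic_partitioning} and on Lemma~\ref{prop:LoutofP}. First I would fix the configuration with $n$ available machines ($L\le n\le P_\text{max}$). Each machine stores exactly one coded block $\Xbf_s^\text{coded}$, which is a linear combination of the $L$ submatrices $\Xbf_1,\dots,\Xbf_L$, each of size $N/L \times d$; since a linear combination does not change the row dimension, the stored data at each machine has size $N/L$. This settles the storage claim and shows it is independent of $n$, which is what makes the family transition-compatible.

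Next I would analyze the online computation. Each stored block is further (virtually) partitioned row-wise into $n$ sub-blocks, so every sub-block has size $(N/L)/n$. The cyclic selection rule assigns to each machine exactly $L$ of these $n$ sub-blocks. Hence the size of the selected data at each machine is $L \cdot (N/L)/n = N/n$, uniformly across machines, giving the second and third claims; summing over the $n$ machines yields total selected size $n\cdot N/n = N$, the fourth claim. The only subtlety here is to describe the cyclic pattern precisely enough that ``each machine uses $L$ of $n$ sub-blocks'' and ``each of the $n$ groups is used by exactly $L$ machines'' both hold simultaneously — this is a simple counting/double-counting argument (total sub-block usages $= nL$ from the machine side $=$ $n$ groups $\times$ $L$ machines per group), but it must be stated carefully.

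Then I would establish exactness of the recovered result $\Xbf\w_t$. Group the rows of $\Xbf$ into $n$ row-blocks matching the sub-block partition; the $j$-th row-block of the true product $\Xbf\w_t$ is recovered from the $j$-th group. Within the $j$-th group, exactly $L$ machines compute $\Xbf_{k,j}^\text{coded}\w_t$, and by construction these $L$ sub-blocks are linear combinations of the $L$ underlying uncoded sub-blocks $\Xbf_{k',j}\w_t$ (the $j$-th row-block pieces of $\Xbf_1\w_t,\dots,\Xbf_L\w_t$) with coefficient matrix $\Gbf_j$, an $L\times L$ submatrix of $\Gbf_{P_\text{max}\times L}$. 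By the rank hypothesis of Lemma~\ref{prop:LoutofP}, $\Gbf_j$ is invertible, so the master recovers the uncoded $j$-th row-block by applying $\Gbf_j^{-1}$ as in Algorithm~\ref{alg:coded_elastic_computing}. Ranging over all $j=1,\dots,n$ reassembles the full vector $\Xbf\w_t$ exactly.

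Finally I would argue that the same reasoning survives an elastic event: after a preemption or a join changes $n$ to $n'$, the stored blocks are untouched (still one coded block of size $N/L$ per machine), only the virtual sub-block partition and the cyclic selection are recomputed for $n'$, and the above argument applies verbatim with $n$ replaced by $n'$. The main obstacle is not any deep inequality but making the cyclic assignment explicit and checking the two combinatorial invariants it must satisfy — ``each machine gets $L$ sub-blocks'' and ``each group is covered by $L$ machines whose coefficient rows form a full-rank $L\times L$ matrix'' — since everything else (storage size, selected size, total size, exact decoding) then follows by direct counting and by invoking Lemma~\ref{prop:LoutofP}.
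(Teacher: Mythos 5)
Your proposal is correct and follows essentially the same route as the paper's own proof: it invokes the cyclic selection pattern so that each machine uses $L$ of $n$ sub-blocks and each group is covered by exactly $L$ machines, applies Lemma~\ref{prop:LoutofP} (full-rank $\Gbf_j$) group by group to recover $\Xbf^j\w_t$ exactly, and obtains the storage and selected-data sizes $N/L$, $N/n$, and $N$ by the same direct counting $\frac{N}{L}\cdot\frac{L}{n}=\frac{N}{n}$ and $\frac{N}{n}\cdot n = N$. The extra care you take in stating the double-counting invariant of the cyclic assignment is a slightly more explicit version of what the paper asserts from the figure, not a different argument.
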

\begin{proof}
Recall that we call the sub-blocks on the same row block (in Figure~\ref{fig:elastic_computing}) of the $P$ different blocks of data a group, and we use $\Xbf_{k,j}^{\text{coded}}$ to represent the coded sub-block of data that is at the $k$-th machine and belongs to the $j$-th group. Let $\Xbf^j$ be the collections of original data that belongs to the $j$-th group. For example, in Figure~\ref{fig:sub2}, $\Xbf^6$ represents the collection of original data $\left[\begin{matrix}
\f_1, \f_2, \f_3
\end{matrix}\right]^\top$.

Then, from the cyclic way of using data, we can see that when each machine uses $L$ sub-blocks, the overall number of used sub-blocks in each group is $L$. From Lemma~\ref{prop:LoutofP}, the results $\u_{t, k,j} = \Xbf_{k,j}^{\text{coded}}\w_{t}$ for all the workers that use the coded data in the $j$-th group can be collected together to decode $\Xbf^j\w_{t}$.

The other claim can be seen from the figure, i.e., the area of the used data is always equal to the area of the original data, and the area of the used data is the same across all remaining machines. We prove this claim as follows. The size of the data in each of the $L$ subsets is $N/L$. Thus, the used data at each machine is the same number $\frac{N}{L}\cdot\frac{L}{n} = N/n$. There are $n$ machines left, so the overall size of the used data is $\frac{N}{n}\cdot n=N$, which is the same as the original data.
\end{proof}

Theorem \ref{thm:achievable} shows that our technique uses the same size of data as the original (uncoded) case. This is desirable for memory-bound applications.

\vspace{0.3ex}
\begin{remark} (Cost analysis)
Recall that $P$ is the number of workers, $N\times d$ is the size of the matrix $\Xbf$, $L$ is the recovery threshold, and $n$ is the number of currently available machines. The encoding (preprocessing step) is a one-time cost for online matrix-vector multiplications. The decoding by solving a linear system at the master node has computational cost $\Obf(LN)$, because the linear system for each group (row-block) of data involves $L$ equations on $L$ unknown subvectors of size $N/L/n$ (which is the height of each sub-block), and there are $n$ such groups. Thus, the computational cost using straightforward matrix-vector multiplication is $L^2\cdot (N/L/n)\cdot n = NL$. The matrix-multiplication step at each worker has cost $\Obf(dN/n)$. Thus, the decoding cost is smaller than the computational cost at each worker as long as $d=\Omega(nL)$. Even if $d<\Omega(nL)$, we can partition the machines into smaller groups and respectively code each group. The decoding complexity can be further reduced to $\Obf(N\log^2 L)$ if Vandermonde systems are used \cite{li2000arithmetic}, at the cost of numerical stability. One thing to note is that the decoding complexity, even for the straightforward matrix-vector multiplication method, is $NL$ and is independent of the number of workers $P$.
\end{remark}

\subsubsection{Lower Bounds on the Storage Cost and the Size of the Selected Data to Use}\label{sec:lower_bound}

Here, we provide a fundamental limit which shows that the achievable scheme provided in \ref{thm:achievable} is optimal in terms of the storage cost at each machine and the size of the actually used data at each iteration, for a fixed number of machines and a fixed number of tolerable machine preemptions. Before we present the theorem, we formalize the definition of the \emph{size} of data using number of bits. This is because in theory, we cannot store arbitrarily high-precision numbers.

\begin{assumption}\label{ass:finite_precision}
Suppose the entries of the matrix $\Xbf$ are i.i.d. random variables that take values in a finite set $\S\in\mathbb{R}$. Each of this random variable has entropy $H=\log |\S|$. Thus, the overall entropy of all the data in $\Xbf$ is $NdH$.
\end{assumption}

\begin{assumption}\label{ass:finite_precision_storage}
For a certain computation policy, suppose each machine initially stores an array of finite-precision numbers in its memory. Each finite-precision number can be an \emph{arbitrary} function of the original data. The overall number of finite-precision numbers that is stored is finite.
\end{assumption}

Note that although we use real-number computation all the time, the numbers that we deal with are always discrete, i.e., we conduct computation of finite-precision numbers in the real field. This validates Assumption \ref{ass:finite_precision}. Also note that the number of possible combinations of floating point numbers that can be stored by a finite-length bit array is finite. Therefore, Assumption \ref{ass:finite_precision_storage} is also valid. We need these two assumptions because if they do not hold, the system can concatenate all the real numbers in $\Xbf$ into one real number and only stores that particular real number. In that case, no bound on the storage is meaningful because one only needs to read this single number in memory to access all the information of $\Xbf$. We also need the assumption that the size of the stored numbers are finite because otherwise, we can enumerate all possible $\Xbf\w$ and store them.

\begin{assumption}\label{ass:data_not_move}
Suppose we do not alter the way that we store the data $\Xbf$ inside the memory after the computation begins. Even when a preemption-type failure happens, we do not move the data at existing machines.
\end{assumption}

\begin{assumption}\label{ass:using_data}
By \emph{selecting to use} one number stored in the memory, we mean the algorithm reads the whole number (e.g., reading all digits if the number is stored as a floating-point number) from the memory for further processing. The array of finite-precision numbers can only be accessed one number at a time, meaning that one cannot access a function value of two numbers and claim that only one number is selected to use.
\end{assumption}

\begin{theorem}\label{thm:fundamental}
Suppose the Assumptions \ref{ass:finite_precision}-\ref{ass:using_data} hold. Suppose we require the recovery of the exact computation result $\Xbf\w$. Then, the following fundamental limits hold.
\begin{itemize}
    \item [(a)] Denote the entropy of the encoded data at the $k$-th machine by $H_k,k=1,2,\ldots,n$. Then, to provide the tolerance to a maximum of $n-L$ failures (recall that $n$ is the current number of available machines), we have $\max_{k\in \{1,2,\ldots,n\}} H_k\ge \frac{N}{L}\cdot dH$;
    \item [(b)] The worst-case entropy of the actually used data (maximized with respect to the choice of $\w$) has to be no less than $NdH$, or $\frac{N}{n}\cdot dH$ at each machine.
\end{itemize}
\end{theorem}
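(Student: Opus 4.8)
The plan is to prove both lower bounds by elementary information-theoretic counting arguments resting on two facts: under Assumption~\ref{ass:finite_precision} the data has entropy $H(\Xbf)=NdH$, and — since the stored data and the decoder must serve \emph{arbitrary} input vectors $\w$ (the inputs $\w_1,\w_2,\dots$ arriving online, and by Assumption~\ref{ass:data_not_move} the layout being frozen once computation begins) — being able to recover $\Xbf\w$ for all $\w$ is the same as being able to recover $\Xbf$. Throughout let $Z_k$ be the (finite-valued, by Assumption~\ref{ass:finite_precision_storage}) array stored at worker $k$; it is a deterministic function of $\Xbf$ and $H_k=H(Z_k)$. I work with the modeling convention that the master stores nothing derived from $\Xbf$ (otherwise no storage bound is possible), so everything the master outputs is a function of the worker messages, and by Assumption~\ref{ass:using_data} worker $k$'s message is a function of a sub-array $U_k(\w)\subseteq Z_k$ of the used numbers together with $\w$.

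\textbf{Part (a).} Since the scheme tolerates $n-L$ preemptions and (by Assumption~\ref{ass:data_not_move}) survivors do not re-encode, there is a fixed set $S$ of $L$ workers from whose stored arrays $\{Z_k\}_{k\in S}$ the master must reconstruct $\Xbf\w$ for every $\w$. Feeding $\w=\e_1,\dots,\e_d$ recovers all columns of $\Xbf$, hence $\Xbf$, so $H\big(\Xbf\mid\{Z_k\}_{k\in S}\big)=0$. Subadditivity of entropy then gives $\sum_{k\in S}H_k\ge H\big(\{Z_k\}_{k\in S}\big)\ge I\big(\Xbf;\{Z_k\}_{k\in S}\big)=H(\Xbf)=NdH$, and taking $S=\{1,\dots,L\}$ yields $\max_{1\le k\le n}H_k\ge\frac1L\sum_{k=1}^{L}H_k\ge\frac{N}{L}\,dH$.

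\textbf{Part (b).} Fix the configuration of $n$ workers and a single input $\w$. As $\w$ is fixed, $\Xbf\w$ is a deterministic function of $U_1(\w),\dots,U_n(\w)$, so $H\big(\Xbf\w\mid U_1(\w),\dots,U_n(\w)\big)=0$ and hence $\sum_{k=1}^{n}H\big(U_k(\w)\big)\ge H\big(U_1(\w),\dots,U_n(\w)\big)\ge H(\Xbf\w)$. It remains to exhibit one $\w^\star$ with $H(\Xbf\w^\star)=NdH$: choosing $\w^\star$ so that the map $(x_1,\dots,x_d)\mapsto\sum_{j}w^\star_j x_j$ is injective on $\S^d$ — which holds for Lebesgue-almost-every $\w^\star$, the obstructions being a finite union of hyperplanes $\{\w:\w\cdot v=0\}$ over the finitely many nonzero difference vectors $v$ — makes $(\Xbf\w^\star)_i$ a bijective function of row $i$ of $\Xbf$, so $H\big((\Xbf\w^\star)_i\big)=dH$, and by independence of the rows $H(\Xbf\w^\star)=NdH$. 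Thus $\max_\w\sum_k H(U_k(\w))\ge NdH$; averaging the terms at $\w=\w^\star$ produces a worker with used-data entropy at least $\frac{N}{n}dH$, which under the equal-load structure of Algorithm~\ref{alg:coded_elastic_computing} holds for every worker.

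The entropy inequalities above are routine; the care — and the only real obstacle — lies in pinning the model down so they actually apply. I must make explicit that the master holds no precomputed information about $\Xbf$, that ``recovering $\Xbf\w$'' is required for every admissible $\w$ (this is what licenses substituting the basis $\e_1,\dots,\e_d$ in (a) and the single generic $\w^\star$ in (b)), and that Assumption~\ref{ass:using_data} genuinely forces $U_k(\w)$ to be a sub-array of the stored list rather than an arbitrary function of it; without this last point one could store $\Xbf$ as a single real number and ``use'' only that one number, collapsing both bounds. The existence of a suitable $\w^\star$ is the one small nonroutine step, handled by the measure-zero argument above.
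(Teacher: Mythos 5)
Your proof is correct, and part (b) takes a genuinely different route from the paper. For part (a) you and the paper do essentially the same thing: feed the basis vectors $\e_1,\ldots,\e_d$ to a surviving set of $L$ machines, recover $\Xbf$, and invoke data processing plus subadditivity; the only difference is that the paper sums the inequality over all $\binom{n}{L}$ survivor sets to get the stronger aggregate bound $\sum_{k=1}^n H_k\ge \frac{NndH}{L}$, whereas you use a single survivor set, which already yields the stated $\max_k H_k\ge \frac{N}{L}dH$ (your wording ``there is a fixed set $S$'' should really be ``for every set of $L$ survivors,'' but since it holds for every set, picking one is fine). For part (b) the paper argues by contradiction: it uses the finiteness of the stored array (Assumption \ref{ass:finite_precision_storage}) to get finitely many candidate subsets $\S$ of used numbers, shows the set $\Pd_\S$ of vectors $\w$ computable from a low-entropy $\S$ is a proper linear subspace of $\mathbb{R}^d$, and concludes that a finite union of proper subspaces cannot cover $\mathbb{R}^d$. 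You instead exhibit directly a generic $\w^\star$ (all $\w$ outside a finite union of hyperplanes determined by the finite alphabet $\S^d$) for which $x\mapsto \w^\star\cdot x$ is injective on $\S^d$, so $H(\Xbf\w^\star)=NdH$ by row independence, and then a single data-processing step forces the used data to carry entropy at least $NdH$. Your route is shorter, needs only finiteness of the alphabet rather than finiteness of the number of stored numbers, and produces an explicit (almost-every) worst-case input; the paper's route gives slightly more structural information (the ``cheap'' inputs form a finite union of proper subspaces) and does not require identifying a $\w$ for which $\Xbf\w$ itself has full entropy. One caveat applies equally to both: the final clause ``$\frac{N}{n}\cdot dH$ at each machine'' follows by averaging only for some machine, or for all machines under the balanced-load symmetry of Algorithm \ref{alg:coded_elastic_computing}, and neither your argument nor the paper's proof establishes it machine-by-machine in general, so this is not a gap relative to the paper.
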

\begin{remark}
By comparing Theorem~\ref{thm:achievable} and Theroem~\ref{thm:fundamental}, we see that the coded elastic computing technique in Algorithm~\ref{alg:coded_elastic_computing} achieves the fundamental limit because each data point has dimension $d$ and each entry has entropy $H$. We note a nuance here that the linear combinations in Algorithm~\ref{alg:coded_elastic_computing} may make each encoded number have entropy larger than $H$. However, this can be solved if (i) we approximate the real-number computation using the computations in a finite field and assume each number in the data matrix $\Xbf$ has a uniform distribution over the finite field \cite{yu2017polynomial}, or (ii) we assume that each entry in the data matrix comes from a quantized version of the standard Gaussian-distribution, and assume that the encoding matrix has normalized rows, such that the linear combination result in \eqref{eqn:linear_combination} (before quantization) is also from the standard Gaussian distribution. To make the point (ii) formal, suppose an arbitrary entry in the data matrix $\Xbf_k$ is sampled from i.i.d. standard Gaussian and quantized to $q(y_k)$ (here $y_k\sim \mathcal{N}(0,1)$). A linear combination of the samples has the form $\sum_{k=1}^L g_{s,k} q(y_k)$ (see equation \eqref{eqn:linear_combination}), where the linear combination coefficients satisfies $\sum_{k=1}^L g_{s,k}^2 = 1$. Then, the linear combination is stored as $q(\sum_{k=1}^L g_{s,k} q(y_k))$ in the $s$-th machine. Since the \emph{unquantized} sum $\sum_{k=1}^L g_{s,k} y_k$ is a standard Gaussian random variable, its quantized version $q(\sum_{k=1}^L g_{s,k} y_k)$ has the same entropy as each $q(y_k)$. Therefore, we only need to show that $q(\sum_{k=1}^L g_{s,k} q(y_k))$ has an entropy that is close to $q(\sum_{k=1}^L g_{s,k} y_k)$. This can be proved by bounding the difference in the entropy, because the two random variables $q(\sum_{k=1}^L g_{s,k} q(y_k))$ and $q(\sum_{k=1}^L g_{s,k} y_k)$ have the same support, and their pmf's can be made arbitrarily close when the quantization function $q$ has a quantization level $\Delta\to 0$.
\end{remark}
\begin{remark}
Note that the claim (b) has to be stated in a worst-case way because for many choices of $\w$, computing $\Xbf\w$ can be degenerated. For example, if we know in advance that $\w$ only takes value in a very small finite set of vectors, we can compute $\Xbf\w$ for all possible $\w$ and store these vectors. When $\w$ is sparse, we also do not need to read the entire matrix $\Xbf$. Therefore, we indeed need to state the fundamental lower bound in terms of the worst-case $\w$.
\end{remark}
\begin{proof}
Now, we prove statement (a). Suppose an arbitrary set of $n-L$ machines fail. Since the algorithm is tolerant to any $n-L$ failures, the master node can still recover exactly all the results $\Xbf\w$, no matter what $\w$ is. Therefore, we can choose $\w$ to be the elements of natural basis $\w = \e_i, i=1,2,\ldots, d$, and collect all the results $[\Xbf\e_1,\Xbf\e_2,\ldots,\Xbf\e_d] = \Xbf\I_d = \Xbf$. Since data processing can only reduce entropy (from the data processing inequality), the entropy of the overall data stored at the remaining $L$ machines is no less than the entropy of $\Xbf$ which is $NdH$. This holds for any combination of $L$ machines, i.e.,
\begin{equation}
    H(\Xbf_{i_1}, \Xbf_{i_2}, \ldots, \Xbf_{i_L})\ge NdH, \forall 1\le i_1<i_2<\ldots <i_L\le n.
\end{equation}
Adding up the above equation for all combinations of $L$ out of $n$ machines, and by plugging in $\sum_{j=1}^LH(\Xbf_{i_j})\ge H(\Xbf_{i_1}, \Xbf_{i_2}, \ldots, \Xbf_{i_L})$, we have $\sum_{k=1}^n H_k\ge \frac{NndH}{L}$.

Then, we prove statement (b). Suppose (b) is not true. Then, it means that there exists a way to encode and store the encoded data in the memory of the machines, such that for any arbitrary vector $\w$, computing $\Xbf\w$ only requires reading data of entropy strictly less than $NdH$. Since the overall number of stored data is finite, we can assume that the overall number of stored numbers is $p$. For an arbitrary subset $\S$ of the stored numbers such that $H(\S)<NdH$, we denote by $\Pd_\S$ the set of $\w$ such that $\Xbf\w$ is able to be computed using only the numbers in $\S$. Then, we can see that the $\Pd_\S$ for an arbitrary subset $\S$ is a linear subspace of $\mathbb{R}^d$. This is because if $\Xbf\w_1$ and $\Xbf\w_2$ are both able to be computed using the numbers in $\S$, then, $a\Xbf\w_1+b\Xbf\w_2 = \Xbf(a\w_1+b\w_2)$ is also able to be computed. Now, we prove that $\Pd_\S$ cannot be the entire $\mathbb{R}^d$ that $\w$ can take value from. This is because if $\Pd_\S$ is equal to $\mathbb{R}^d$, then $\Xbf\e_1,\Xbf\e_2,\ldots, \Xbf\e_d$ are all able to be computed using the numbers in $\S$ ($\e_1,\ldots \e_d$ are the standard basis), which means $\Xbf$ itself is able to be computed using $\S$. This is a clear contradiction to the data-processing inequality because $H(\Xbf)=NdH$, while $H(\S)<NdH$. Thus, $\Pd_\S$ can at most be a linear space of dimension $d-1$ in $\mathbb{R}^d$. This means the union of $\Pd_\S$ for all $\S$ is a finite collection of linear spaces of dimension $d-1$ in $\mathbb{R}^d$, which cannot cover the entire $\mathbb{R}^d$. Thus, there must exist $\w$ in $\mathbb{R}^d$ such that $\Xbf\w$ is not able to be computed by using data of entropy strictly less than $NdH$.
\end{proof}

\subsection{Related Works and the Comparison with Non-elastic Coded Computing Techniques}\label{sec:comparison}

Coded computing is an emerging area to tackle stragglers \cite{suh2017matrix,yang2017NIPS,ferdinand2016anytime,ferdinand2018hierarchical,park2018hierarchical,tang2018erasure,sheth2018application,ozfaturay2018speeding,maity2018robust,baharav2018straggler,lee2017multicore,yu2018lagrange,Emina2,heterogeneousclusters,lee2017matrix,lee2018speeding,dutta2018optimal,wang2018coded,mallick2018rateless,Suhas1,Suhas2,severinson2017block,yu2018entangled,dutta2016short,dutta2017coded,fahim2017optimal}, machine failures and soft errors \cite{dutta2018DNN1,Virtualization,yang2017encoded,hadjicostis2005coding,jeongLRC}, security issues and adversaries \cite{bitar2017minimizing,bitar2018staircase,charles2018gradient,yang2019secure,yu2018lagrange,chen2018draco}, and communication bottlenecks \cite{GC2,GC3,GC4,ye2018communication,jeongFFT,Salman1,allerton18,jeongFFT,prakash2018coded}. It can handle both linear and nonlinear computations \cite{kosaian2018learning,dutta2018DNN1,GC2}. It is a significant advance on classical algorithm-based fault-tolerance (ABFT) techniques \cite{ABFT1984} and noisy computing \cite{Tay_Bel_68,Pip_FOC_85,von1956probabilistic}, and often achieves scaling sense improvements. Our work is the first to address the system elasticity issue using coded computing.

Non-elastic coded computing cannot adapt the algorithmic procedures to exploit \textit{new} machines. We consider the case when the number of machines increases from $L$ (i.e., the recovery threshold) to infinity with fixed storage size at each machine. We can easily see that elastic coded computing can have lower and lower memory-access time by using less and less data at each machine, while ordinary coded computing cannot because the used data remains unchanged at each machine. However, each node can have more than one linear combinations \cite{mallick2018rateless}. E.g., suppose $L$=8 linear combinations suffice to recover the result, and each machine has 8 linear combinations. Then, when the number of machines increases from 1 to 8, rateless coded computing can move between the optimal points $(n,m)=$ $(1,8)$, $(2,8)$, $(4,8)$, $(8,8)$, i.e., it achieves the optimal size of the selected data to use at some of the $(n,m)$ configurations. But they cannot move beyond the point $(8,8)$ and use more machines, because each machine, in the rateless-coded case, cannot use smaller than 1 (a fraction) of linear combinations \footnote{Although for rateless code, the number of available machines cannot be more than the code dimension $L$. Note that in this example, the number of available machines in the coded elastic computing scheme cannot exceed the memory size $m$ either. However, since $m$ is usually in scaling sense larger than the number of machines, this limit will not be reached unless in the limit of extreme strong scaling.}. From another perspective, for the rateless-coded scheme to achieve fully transition compatibility in the large scale, the number of linear combinations at each machine also has to grow with the number of (possible) machines, leading to scaling-sense higher decoding complexity than the elastic partitioning technique in Algorithm~\ref{alg:coded_elastic_computing} (Note that the decoding cost in Algorithm~\ref{alg:coded_elastic_computing} does not increase with the number of machines, and can be further reduced if parallel decoding is allowed. See Section~\ref{sec:fully-distributed} for details). Another advantage of coded elastic computing compared to the rateless-coded scheme is that it allows non-uniform partitioning of each row-strip (see Figure~\ref{fig:elastic_computing}) and can have much more fine-grained task scheduling. Regarding the communication efficiency of coded elastic computing, see Section \ref{sec:fully-distributed} for a fully-distributed version of coded elastic computing.

For task-level failures or stragglers, dynamic task allocation is useful \cite{gauri2015straggler, gauri2014efficient}. The task scheduler can choose which tasks to replicate, relaunch based on task profiles, and delay the relaunching to save time \cite{Emina1,ananthanarayanan2013effective,Emina2,gardner2015reducing,suresh2015c3}. However, unlike task relaunching and allocation, recovering from machine preemptions requires both restoring the machine states and downloading the data, which is time-consuming. The situation gets worse if elastic events are frequent.

\section{Applications of the Coded Elastic Computing}\label{sec:more_applications}

The cyclic way of elastic data partitioning applies to general coded computing techniques proposed thus far and is not limited to matrix-vector multiplications.

\subsection{Matrix-Matrix Multiplications}\label{sec:mat-mat}

First, we consider the application of matrix-matrix multiplications. We consider an \emph{online} version: we store an encoded version of the matrix $\Abf$ at $P$ machines and compute the matrix multiplication $\Abf\Bbf$ for different $\Bbf$'s. It is shown \cite{fahim2017optimal,dutta2018optimal,yu2018entangled} that the storage-optimal technique for coded robust matrix multiplications is the MatDot scheme proposed in \cite{fahim2017optimal,dutta2018optimal}. This technique partitions the matrix $\Abf$ column-wise and the matrix $\Bbf$ row-wise, and stores linearly combined submatrices of $\Abf$ and $\Bbf$ at each machine. In an online setting with elastic machine preemptions, we encode and store $\Abf$ at the initial stage, and do not move $\Abf$ anymore. When we receive $\Bbf$, we still partition the matrix $\Bbf$ row-wise but linearly combine them using the knowledge of the availability of the machines. The advantage is that we do not need to use the polynomial-based codes and can thus avoid possible numerical issues. At the same time, since the availability of the machines are known before we encode $\Bbf$, we can also remove the factor of 2 in the recovery threshold of MatDot codes. We can also use the same computational time cost as the uncoded case, which is similar to what we can achieve in the matrix-vector case.

More precisely, suppose we parition the matrix $\Abf$ into $L$ column blocks $\Abf_1, \ldots, \Abf_L$, and encode these blocks initially into $P$ blocks $\Abf_s^\text{coded}, s=1,2,\ldots,P$, where $P$ is the initial number of machines:
\begin{equation}
    \Abf_s^\text{coded} = \sum_{k=1}^L g_{s,k} \Abf_k.
\end{equation}
Denote by $\Gbf_{P_\text{max}\times L}$ the predetermined encoding matrix $[g_{s,k}]$, which is of size $P_\text{max}\times L$. Again, the extra linear combinations $\Abf_s^\text{coded}, s=P+1, P+2, \ldots,P_\text{max}$ can be generated offline before the computation and stored on the cloud. Assume there are currently $n$ machines that remain. Similar to Algorithm~\ref{alg:coded_elastic_computing}, we partition each coded sub-matrix $\Abf_s^\text{coded}$ into $n$ submatrices. However, here, we partition each $\Abf_s^\text{coded}$ \emph{column-wise} to $\Abf_{s,i}^\text{coded}, i=1,2,\ldots, n$. This is mathematically equivalent to partitioning each uncoded submatrix $\Abf_s$ into $\Abf_{s,i}, i=1,2,\ldots, n$, and compute
\begin{equation}
    \Abf_{s,i}^\text{coded} = \sum_{k=1}^L g_{s,k} \Abf_{k,i}, i=1,2,\ldots, n.
\end{equation}
The subscript $s$ in $\Abf_{s,i}$ belongs to a subset with size $n$ of the set $\{1,2,\ldots,P_\text{max}\}$, which corresponds to the $n$ available machines after the preemption failures. We again select to use these submatrices $\Abf_{s,i}$ in a cyclic fashion, just as shown in Figure~\ref{fig:elastic_computing}. For example, consider the case when $P=6$, $L=3$, $n=4$ and the 2nd and the 4th machines are preempted, which is exactly the same as shown in Figure~\ref{fig:sub4}. Then, we use $\Abf_{1,1}, \Abf_{1,2}, \Abf_{1,3}$ at the 1st machine, $\Abf_{3,2}, \Abf_{3,3}, \Abf_{3,4}$ at the 3rd machine, and so on. Denote by $\mathcal{S}_i$ the set of the indices of all the machines that use $\Abf_{s,i}$. For example, for $i=1$, in the above example, $\mathcal{S}_1 = \{1,5,6\}$. Again, similar to Section~\ref{sec:algorithm}, denote by $\Gbf_i$ the submatrix of $\Gbf$ with row indices in $\mathcal{S}_i$. Each $\Gbf_i$ is a $L\times L$ matrix. Denote by
\begin{equation}\label{eqn:H_i}
 \Hbf_i = (\Gbf_i^\top)^{-1}  ,
\end{equation}
and assume $\Hbf_i$ has element $h_{l,k},l = 1,2,\ldots,L,k=1,2,\ldots,L$.

When we get the matrix $\Bbf$, we partition it row-wise into $L$ blocks $\Bbf_1, \Bbf_2, \ldots, \Bbf_L$. Then, we partition each submatrix $\Bbf_k$ into $n$ row blocks $\Bbf_{k,i}, i=1,2,\ldots,n$ as well. For each $i$, suppose $\mathcal{S}_i = \{s_{i,1},s_{i,2},\ldots, s_{i,L}\}$, where $1\le s_{i,1}<s_{i,2}<\ldots< s_{i,L}\le n$. Then, we encode
\begin{equation}\label{eqn:encode_B}
    \Bbf_{s_{i,l},i}^\text{coded} = \sum_{k=1}^L h_{l,k}\Bbf_{k,i},
\end{equation}
and send $\Bbf_{s_{i,l},i}^\text{coded}$ to the $s_{i,l}$-th machine to compute $\Abf_{s_{i,l},i}^\text{coded}\Bbf_{s_{i,l},i}^\text{coded}$.

Now, we show that when we do a reduction on all the partial results $\Abf_{s_{i,l},i}^\text{coded}\Bbf_{s_{i,l},i}^\text{coded}$, we can indeed get $\Abf\Bbf$.
\begin{lemma}
If we use the cyclic partitioning technique to determine each $\mathcal{S}_i = \{s_{i,1},s_{i,2},\ldots, s_{i,L}\}$ and encode $\Bbf$ according to \eqref{eqn:encode_B}, we have
\begin{equation}
\sum_{i=1}^{n} \sum_{l=1}^L \Abf_{s_{i,l},i}^\text{coded}\Bbf_{s_{i,l},i}^\text{coded} = \Abf\Bbf.
\end{equation}
\end{lemma}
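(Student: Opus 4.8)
The plan is to expand the double sum and show that the $\mathbf{A}_k$–$\mathbf{B}_k$ cross terms reorganize into exactly $\sum_{k=1}^L \mathbf{A}_k \mathbf{B}_k = \mathbf{A}\mathbf{B}$, with all off-diagonal terms (pairing $\mathbf{A}_{k}$ with $\mathbf{B}_{k'}$ for $k\neq k'$) cancelling. First I would fix the group index $i$ and work inside a single ``column strip''. Substituting the definitions $\mathbf{A}_{s_{i,l},i}^\text{coded} = \sum_{k=1}^L g_{s_{i,l},k}\mathbf{A}_{k,i}$ and $\mathbf{B}_{s_{i,l},i}^\text{coded} = \sum_{k'=1}^L h_{l,k'}\mathbf{B}_{k',i}$ from \eqref{eqn:encode_B}, the inner sum becomes
\begin{equation}
\sum_{l=1}^L \mathbf{A}_{s_{i,l},i}^\text{coded}\mathbf{B}_{s_{i,l},i}^\text{coded} = \sum_{k=1}^L \sum_{k'=1}^L \left( \sum_{l=1}^L g_{s_{i,l},k} h_{l,k'} \right) \mathbf{A}_{k,i}\mathbf{B}_{k',i}.
\end{equation}
The key observation is that the parenthesized scalar is the $(k,k')$ entry of the matrix product $\mathbf{G}_i^\top \mathbf{H}_i$, where $\mathbf{G}_i$ has rows indexed by $\mathcal{S}_i$ (so its $(l,k)$ entry is $g_{s_{i,l},k}$) and $\mathbf{H}_i = (\mathbf{G}_i^\top)^{-1}$ by \eqref{eqn:H_i}. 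Hence $\sum_{l=1}^L g_{s_{i,l},k} h_{l,k'} = (\mathbf{G}_i^\top \mathbf{H}_i)_{k,k'} = \mathds{1}_{\{k=k'\}}$, so the inner sum collapses to $\sum_{k=1}^L \mathbf{A}_{k,i}\mathbf{B}_{k,i}$.

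Next I would sum over the group index $i$ from $1$ to $n$. The remaining ingredient is that the column-wise partition of each $\mathbf{A}_k$ into $\mathbf{A}_{k,1},\ldots,\mathbf{A}_{k,n}$ and the row-wise partition of each $\mathbf{B}_k$ into $\mathbf{B}_{k,1},\ldots,\mathbf{B}_{k,n}$ are conformable, so that $\sum_{i=1}^n \mathbf{A}_{k,i}\mathbf{B}_{k,i} = \mathbf{A}_k\mathbf{B}_k$ — this is just block matrix multiplication of the $\mathbf{A}_k\mathbf{B}_k$ product along the split inner dimension. Combining, $\sum_{i=1}^n \sum_{l=1}^L \mathbf{A}_{s_{i,l},i}^\text{coded}\mathbf{B}_{s_{i,l},i}^\text{coded} = \sum_{i=1}^n\sum_{k=1}^L \mathbf{A}_{k,i}\mathbf{B}_{k,i} = \sum_{k=1}^L \mathbf{A}_k\mathbf{B}_k = \mathbf{A}\mathbf{B}$, where the last equality is the column/row block decomposition of $\mathbf{A}\mathbf{B}$ along the $L$-way split.

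The main obstacle — really a bookkeeping subtlety rather than a genuine difficulty — is keeping the two layers of indexing straight: the outer cyclic-partition index $i$ (the row strips in Figure~\ref{fig:elastic_computing}, which here become column strips of $\mathbf{A}$ / row strips of $\mathbf{B}$) versus the inner index $l$ enumerating the $L$ machines in $\mathcal{S}_i$, and the fact that the actual machine label $s_{i,l}$ is what selects the row of the global encoding matrix $\mathbf{G}$. One must verify that $\mathbf{G}_i$ as used in \eqref{eqn:encode_B} is precisely the $L\times L$ submatrix of $\mathbf{G}$ with rows $\mathcal{S}_i$ — which holds by the definition of $\mathbf{G}_i$ given just before \eqref{eqn:H_i} — and that this submatrix is invertible, which is guaranteed by the full-rank hypothesis on all $L\times L$ submatrices of $\mathbf{G}_{P_\text{max}\times L}$ already assumed in Lemma~\ref{prop:LoutofP}. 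Once the identity $\mathbf{G}_i^\top \mathbf{H}_i = \mathbf{I}_L$ is in place, the rest is the conformability of the block partitions, and no properties of the cyclic schedule beyond ``each $\mathcal{S}_i$ has exactly $L$ distinct machines'' are needed.
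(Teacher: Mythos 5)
Your proposal is correct and follows essentially the same route as the paper's proof: expand each $\Abf_{s_{i,l},i}^\text{coded}\Bbf_{s_{i,l},i}^\text{coded}$ via the encodings, sum over $l$ so that the coefficient structure collapses to $\Gbf_i^\top\Hbf_i = \I_L$ (you do this entrywise, the paper writes it as a sum of rank-one outer products of the rows of $\Gbf_i$ and $\Hbf_i$, which is the same computation), and then sum over $i$ using conformability of the column/row block partitions to recover $\Abf\Bbf$. Your explicit remark that invertibility of each $\Gbf_i$ comes from the full-rank hypothesis of Lemma~\ref{prop:LoutofP} is a small but welcome clarification that the paper leaves implicit.
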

\begin{proof}
Note that
\begin{equation}
\begin{split}
\Abf_{s_{i,l},i}^\text{coded}\Bbf_{s_{i,l},i}^\text{coded} =&   (\sum_{k=1}^L g_{s_{i,l},k} \Abf_{k,i})(\sum_{k=1}^L h_{l,k}\Bbf_{k,i}) \\
\overset{(a)}{=}& [\Abf_{1,i},\Abf_{2,i},\ldots,\Abf_{L,i}] (\Gbf_i)_\text{l-th row}^\top (\Hbf_i)_\text{l-th row} [\Bbf_{1,i}^\top,\Bbf_{2,i}^\top,\ldots,\Bbf_{L,i}^\top]^\top,
\end{split}
\end{equation}
where (a) holds because $[g_{s_{i,l},1}, g_{s_{i,l},2}, \ldots, g_{s_{i,l},L}]$ is the $s_{i,l}$-th row in $\Gbf$, which is the $l$-th row in $\Gbf_i$ (recall that $\Gbf_i$ is the submatrix of $\Gbf$ with row indices in $\mathcal{S}_i = \{s_{i,1},s_{i,2},\ldots, s_{i,L}\}$). Thus, by adding up the above equation for all $l$, we have that for each $i$,
\begin{equation}
\begin{split}
    \sum_{l=1}^L \Abf_{s_{i,l},i}^\text{coded}\Bbf_{s_{i,l},i}^\text{coded}
     =&\sum_{l=1}^L[\Abf_{1,i},\Abf_{2,i},\ldots,\Abf_{L,i}] (\Gbf_i)_\text{l-th row}^\top (\Hbf_i)_\text{l-th row} [\Bbf_{1,i}^\top,\Bbf_{2,i}^\top,\ldots,\Bbf_{L,i}^\top]^\top\\
    =& [\Abf_{1,i},\Abf_{2,i},\ldots,\Abf_{L,i}] \Gbf_i^\top \Hbf_i [\Bbf_{1,i}^\top,\Bbf_{2,i}^\top,\ldots,\Bbf_{L,i}^\top]^\top\\
    \overset{(a)}{=}& [\Abf_{1,i},\Abf_{2,i},\ldots,\Abf_{L,i}] [\Bbf_{1,i}^\top,\Bbf_{2,i}^\top,\ldots,\Bbf_{L,i}^\top]^\top \\
    =& \sum_{k=1}^L \Abf_{k,i}\Bbf_{k,i},
\end{split}
\end{equation}
where the equation (a) holds because $\Hbf_i = (\Gbf_i^\top)^{-1}$ (see equation \eqref{eqn:H_i}). Adding up the above equation for all $i$, we have
\begin{equation}
    \sum_{i=1}^{n} \sum_{l=1}^L \Abf_{s_{i,l},i}^\text{coded}\Bbf_{s_{i,l},i}^\text{coded} = \sum_{i=1}^{n}\sum_{k=1}^L \Abf_{k,i}\Bbf_{k,i} = \Abf\Bbf.
\end{equation}
\end{proof}
\begin{remark}(Cost analysis)
We can see that the computational cost for this algorithm comes from two parts: (1) the encoding of $\Bbf$ using \eqref{eqn:encode_B}, and (2) the computational cost of $\Abf_{s_{i,l},i}^\text{coded}\Bbf_{s_{i,l},i}^\text{coded}$. Assume the matrix $\Abf$ has size $d_A\times N$ and $\Bbf$ has size $N\times d_B$, where $d_A,d_B = \Theta(N)$. Then, encoding $\Bbf$ has complexity $nL\cdot L\cdot (d_B\times N/L/n)= Ld_BN = \Theta(LN^2)$. Computation of a single $\Abf_{s_{i,l},i}^\text{coded}\Bbf_{s_{i,l},i}^\text{coded}$ has complexity $d_A\times N/L/n \times d_B$, and each machine computes $L$ different $\Abf_{s_{i,l},i}^\text{coded}\Bbf_{s_{i,l},i}^\text{coded}$, which means the overall complexity is $d_A\times N/L/n \times d_B\times L = d_Ad_BN/n$. Note that this is in the order of $\Theta(N^3)$ and the complexity is the same as distributing the matrix-matrix multiplication task to $n$ machines. The encoding time is much less than the computation time per worker if $Ld_BN\ll d_Ad_BN/n$, or $Ln\ll d_A$. If the encoding time is much smaller than the matrix-multiplication time, we again achieve linear scaling of the theoretical computational complexity in the number of available machines $n$ without moving data at the existing machines. At the same time, the computation time cost is the same at each worker machine as the uncoded case, which means that the factor 2 in the MatDot codes \cite{dutta2018optimal} can be removed\footnote{It was shown that the optimal recovery threshold in MatDot is $2L-1$, which means that the computational cost at each machine is twice that of the uncoded case for the same number of machines. The scheme here removes the factor of 2 because we consider a different problem: the system becomes aware of the indices of the failed machines after the preemptions have happened, and can adaptively change the encoding of $\Bbf$.}. 
\end{remark}

\subsection{Coded Elastic Computing for Linear Models}

Then, we focus on the application of coded computing for linear regression \cite{lee2018speeding,li2018polynomially}. For the ease of presentation, we consider vanilla gradient descent (we also use line search in the experiment validation for all competing techniques), in which the full matrix $\Xbf$ is used at each iteration. The technique developed here naturally generalizes to stochastic gradient descent and other generalized linear models such as logistic regression. Consider the linear objective function:\vspace{-3mm}
\begin{equation}
   f(\w; \Xbf, \y) = \sum_{i=1}^n (\w^\top \x_i -\y_i)^2 + h(\w).
\end{equation}
The vanilla distributed gradient descent has the form $\w_{t+1} = \w_t -\eta \g_t$ and $\g_t = \Xbf^\top (\Xbf \w_t - \y) + \partial_{\w} h(\w_t)$. When the data matrix $\Xbf$ is large, the most time-consuming part is the computation of $\Xbf^\top (\Xbf \w_t - \y)$. We thus extend Algorithm~\ref{alg:coded_elastic_computing} in the following way to compute $\Xbf^\top (\Xbf \w_t - \y)$. It is nothing but a combination of Algorithm~\ref{alg:coded_elastic_computing} and the matrix-matrix multiplication algorithm in Section~\ref{sec:mat-mat}.
\begin{itemize}
    \item Compute $\Xbf^j \w_t$ (where $\Xbf^{j}$ is the data in the $j$-th group, or the $j$-th row block in Figure~\ref{fig:elastic_computing} across different machines) for all group-index $j$ in an elastic way using Algorithm~\ref{alg:coded_elastic_computing};
    \item The master computes $\z_t^j = \Xbf^j \w_t - \y^j$ for all group-index $j$, where $\y^j$ are the labels corresponding to the data points in the $j$-th group;
    \item The master re-encodes $\z_t^j$'s using the (pre-computed) inverse generator matrix $\Hbf_j = (\Gbf_j^{-1})^\top$ to obtain $(\Gbf_j^{-1})^\top\z_t^j$, and scatters the results to the workers that use the $j$-th group of data;
    \item Since data at workers are encoded using $\Gbf_j$, the reduced results from all the workers are
    \begin{equation}
       \sum_j (\Xbf^{j})^\top \Gbf_j^\top(\Gbf_j^{-1})^\top\z_t^j = \sum_j (\Xbf^{j})^\top (\Xbf^j \w_t - \y^j) = \Xbf^\top (\Xbf \w_t - \y).
    \end{equation}
\end{itemize}
\vspace{-1ex}
Note that in the above extension of Algorithm~\ref{alg:coded_elastic_computing}, the workers also utilize the data as in Figure~\ref{fig:elastic_computing}. The experiment results of coded elastic computing in linear models are provided in Section~\ref{sec:exp_MT}.

\subsection{Fully Distributed Coded Elastic Computing}\label{sec:fully-distributed}

Coded elastic computing is not restricted to a master-worker setting. In this section, we consider a fully distributed coded elastic matrix-vector multiplication technique that is a trivial generalization of Algorithm~\ref{alg:coded_elastic_computing}. The advantage of a fully distributed framework is that communication among workers can be overlapped and the communication to the master does not become the single bottleneck in the limit of a large number of machines. Consider an application of iterative matrix-vector multiplication
\begin{equation}
\w_{t+1} = f(\Xbf\w_t),
\end{equation}
where $\Xbf$ is a square data matrix, and $f(\cdot)$ is an entry-wise low-complexity operation on the vector $\Xbf\w_t$. We again consider the example shown in Figure~\ref{fig:elastic_computing},\footnote{Note that the example in Figure~\ref{fig:elastic_computing} partitions the data row-wise. The column-wise partition can use the elastic coding scheme in Section~\ref{sec:mat-mat}, which is essentially an elastic dot-product scheme.} but apart from the 6 worker machines, there is no master-node. In each iteration, each machine computes its own matrix-vector multiplication based on the selected data to use. For example, the 2nd machine computes $\b_2 \w_t$, $\c_2 \w_t$, $\d_2 \w_t$, and stacks the results into one vector. Then, the results at each machine can be broadcast to all the other machines using an all-gather communication (using a bucketing algorithm \cite{barnett1994interprocessor}). The communication is also done in a cyclic way so that all the communications can be maximally overlapped (i.e., all machines can communicate at the same time). In this way, the communication bandwidth can be reduced when compared to the master-worker framework. Moreover, the overall communication time in this scheme is the same as the uncoded scheme using the same number of machines, because the selected size of data at each machine is the same as the uncoded case. The decoding of the intermediate results at each iteration is done independently at each machine. The decoding can also be conducted distributedly for each group of data (each row-block), i.e., each machine only takes care of the decoding of each row-block. This can significantly reduce the decoding time, but requires all machines to exchange different messages twice and carry out an all-to-all communication \cite{bruck1997efficient}. Also, the smaller communication is possible at the cost of more storage or computation \cite{agarwal1995three,solomonik2011communication,Salman1,ye2018communication} in the individual machines .

\subsection{Application to Deep Neural Networks}\label{sec:dnn}

The proposed coded elastic data partitioning directly applies to the coded training of model-parallel deep neural networks \cite{dutta2018DNN1,dutta2018DNN2}. The coded computation in \cite{dutta2018DNN1,dutta2018DNN2} utilizes a novel technique called PolyDot, which adopts a 2D partitioning on the weight matrices of a neural network and encodes the weight matrices using polynomial-based codes. The way to apply coded elastic data partitioning to the PolyDot technique is exactly the same as Section \ref{sec:coded_elastic_partitioning}:
\begin{itemize}
    \item In PolyDot, each machine has one encoded submatrix of the weight matrix. Partition each encoded submatrix in Polydot into $n$ small submatrices. Recall that $n$ represents the number of currently available machines.
    \item Let each machine select to compute the matrix-vector multiplication using only $L$ out of $n$ small submatrices. Recall that $L$ is the recovery threshold.
    \item The selection is done in a cyclic way as well to ensure that there are enough encoded computation results to decode the original computation results.
\end{itemize}
Note that the matrix-vector multiplication in \cite{dutta2018DNN1,dutta2018DNN2} requires both forward pass and backward pass. This means that for the weight matrix $\mathbf{W}$, we need to compute matrix-vector multiplications in two ways, i.e., $\mathbf{Wx}$ and $\mathbf{W}^\top \mathbf{x}$. Therefore, in the forward pass, the partitioning of the encoded weight submatrix at each machine should be done in a row-wise way, which is similar to Section \ref{sec:coded_elastic_partitioning}. In the backward pass, the partitoining should be done in a column-wise way because one needs to compute $\mathbf{W}^\top \mathbf{x}$.

\eat{
\section{Mini-Benchmark on Amazon EC2}

In this section, we show an experiment using the Amazon Elastic Compute Cloud. We use one master machine and six worker machines. We use the t2.medium instances. We mimic the environment of the elastic computing by using different number of machines to compute the same matrix-vector product $\mathbf{Xw}$. The matrix has size $30000\times 10000$ and it is partitioned initially into 6 submatrices of size $5000\times 10000$. Then, each submatrix is stored at one machine. To mimic the elastic events, we change the number of available machines by injecting artificial failures. The maximum number of failures is 3. The per-iteration overall time (including both communication and computation) is shown in Figure~\ref{fig:amazon_minitest}. The result is averaged using 20 independent trials.
}

\section{Implementation and Experimental Evaluation}\label{sec:exp_MT}

The proposed coded elastic computing technique has been implemented on top of Apache REEF~\cite{reef} Elastic Group Communication (EGC) framework~\footnote{\url{https://github.com/interesaaat/reef/tree/elastic-sync}}. REEF EGC provides an API allowing to implement elastic computations by chaining fault-tolerant MPI-like primitives.
In this short paper we assess the performance of our elastic code computing approach through 2 mini-benchmarks.

\stitle{Matrix-vector mini-benchmark.}
In this mini-benchmark,  we test that indeed the time cost decreases linearly with the increase in the number of machines available.
We mimic an elastic computing environment on Amazon EC2 by using different numbers of M4.large instances to compute the same matrix-vector product $\mathbf{Xw}$. The master node is a M4.4xlarge machine. The matrix is randomly generated and with size $100000\times 10000$, and it is partitioned initially into $10$ submatrices of size $10000\times 10000$. Then, they are encoded into $20$ submatrices of the same size, and each submatrix is stored at one machine (for a total of 20 machines). To mimic the elastic events, we change the number of available machines by injecting artificial failures. The maximum number of failures is 10. The per-iteration overall time (including both communication and computation) is shown in Figure~\ref{fig:amazon_minitest}. The result is averaged using 20 independent trials.
As we can see, the coded elastic computing technique can utilize the extra machines when the number of machines increases (see the blue bars). We also compare it with a non-adaptive coded computation baseline. It can be seen that when the number of machines increases, the adaptive scheme can provide increasing speedup when compared to the non-adaptive coded computing scheme\footnote{Note that in the non-adaptive scheme, we do not utilize the coding to deal with stragglers because the amount of stragglers in m4 instances is small. The per-iteration time increases slightly with the number of machines because of the extra communication overhead}. The maximum observed speedup in our experiment is 46\%.

\stitle{Linear model mini-benchmark.} In this experiment, we test a coded implementation of linear regression using line-search-based batch gradient descent (the same setting as the baseline \cite{narayanamurthy2013towards}). We run the test over 20 machines on a Microsoft internal multi-tenancy cluster. Each data point in the dataset has 3352 features, and we sample 10000 data for training and 10000 data for testing. We generate random failures and allow REEF EGC to reschedule new machines when failures occur. 
\begin{figure*}
    \centering
    \subfigure[Matrix-vector time]{
		\includegraphics[width=60mm]{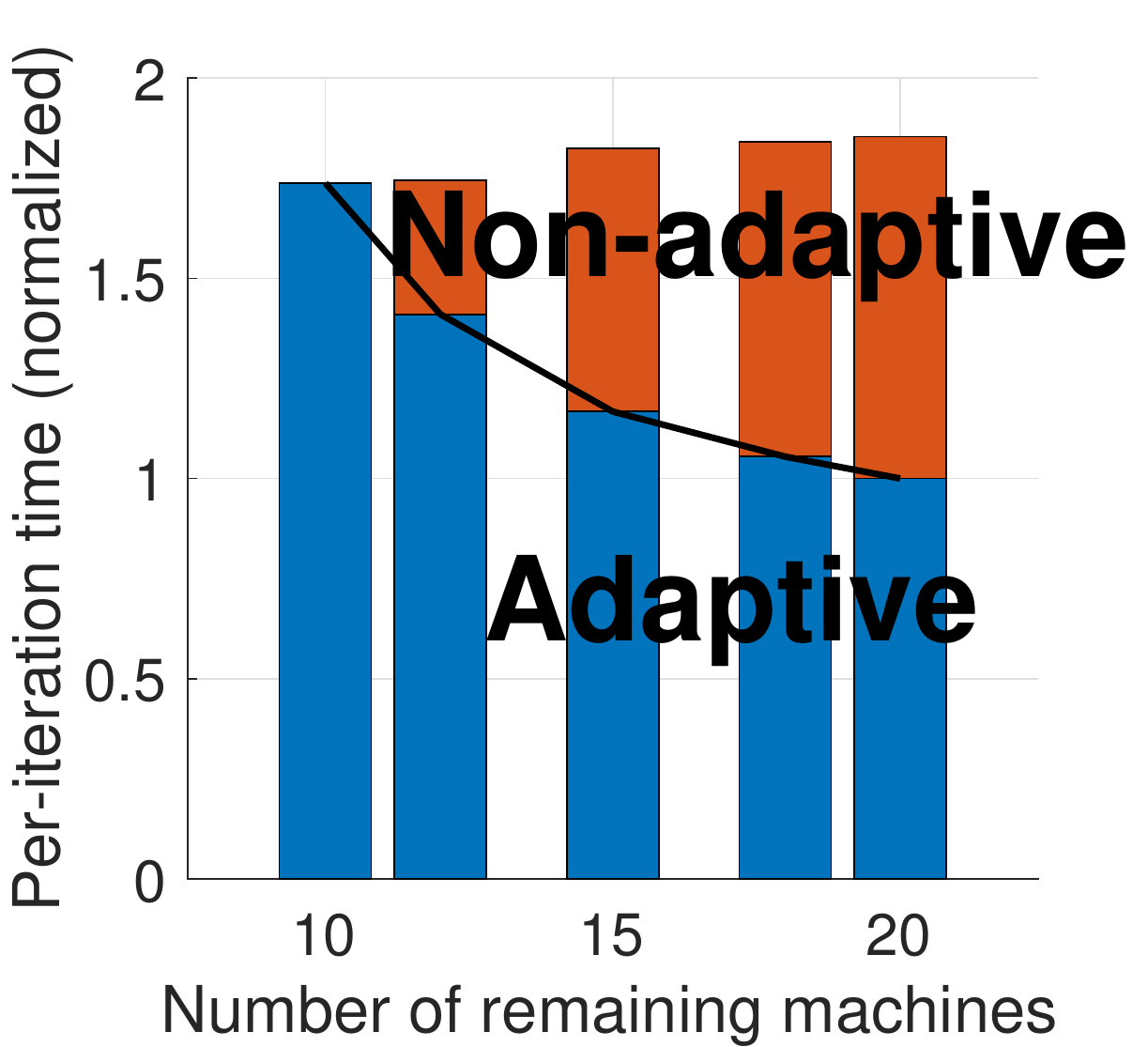}
		\label{fig:amazon_minitest}}
	\subfigure[Linear model time]{\includegraphics[width=53mm]{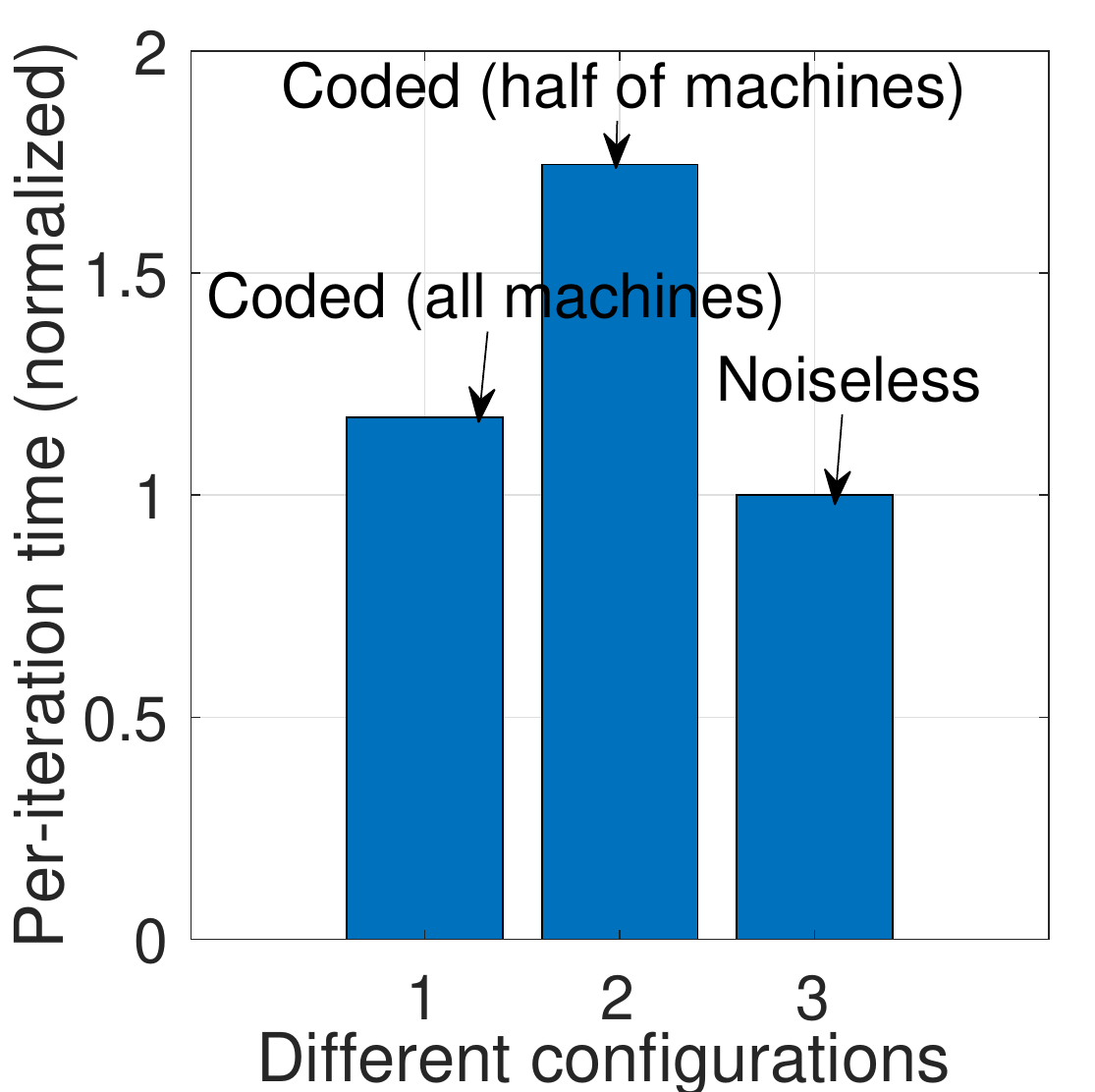}\label{fig:MT1}}
    \subfigure[Linear model error (different methods)]{\includegraphics[width=65mm]{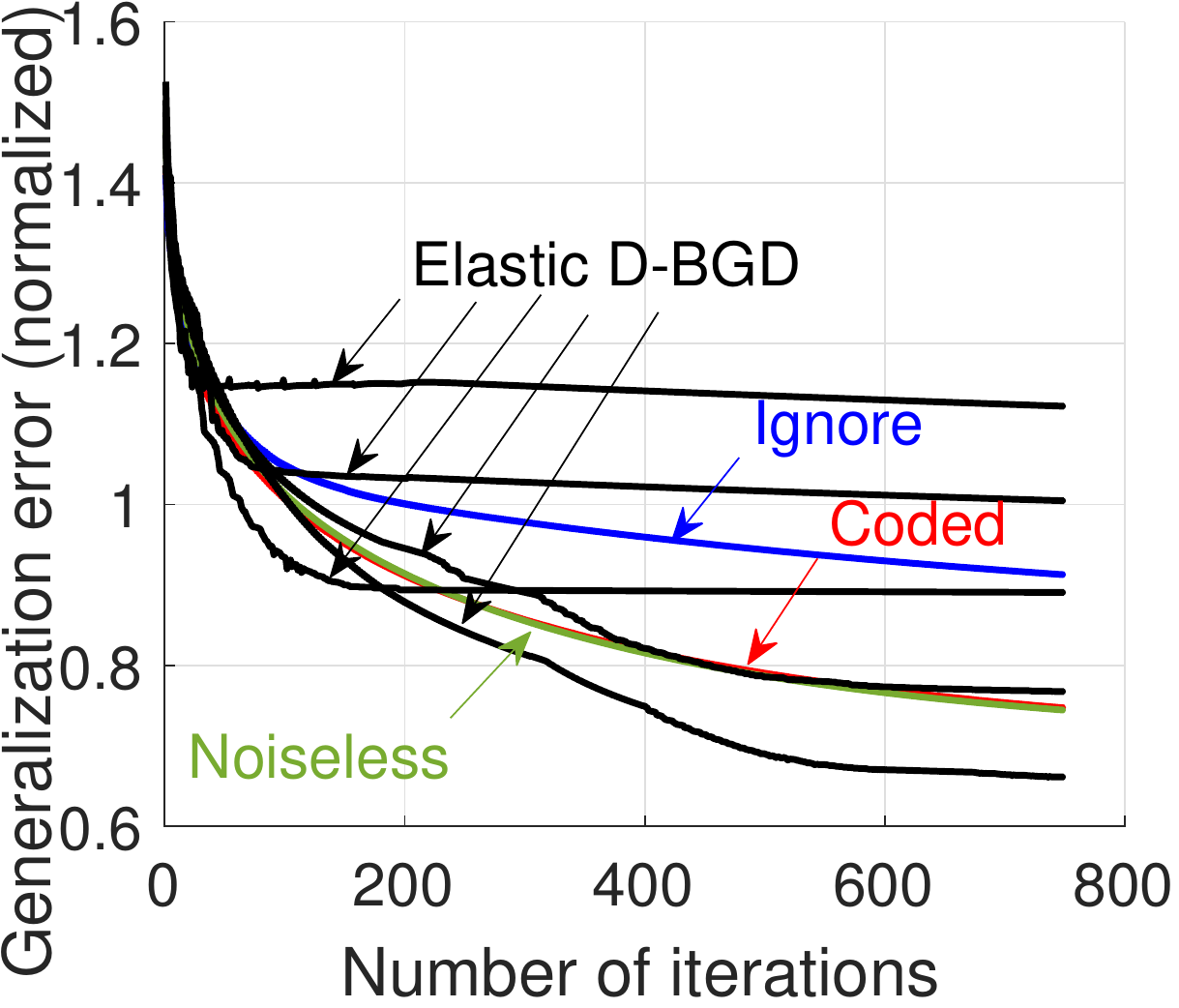}\label{fig:MT2}}
	\subfigure[Linear model error (different regularization coefficients)]{\includegraphics[width=65mm]{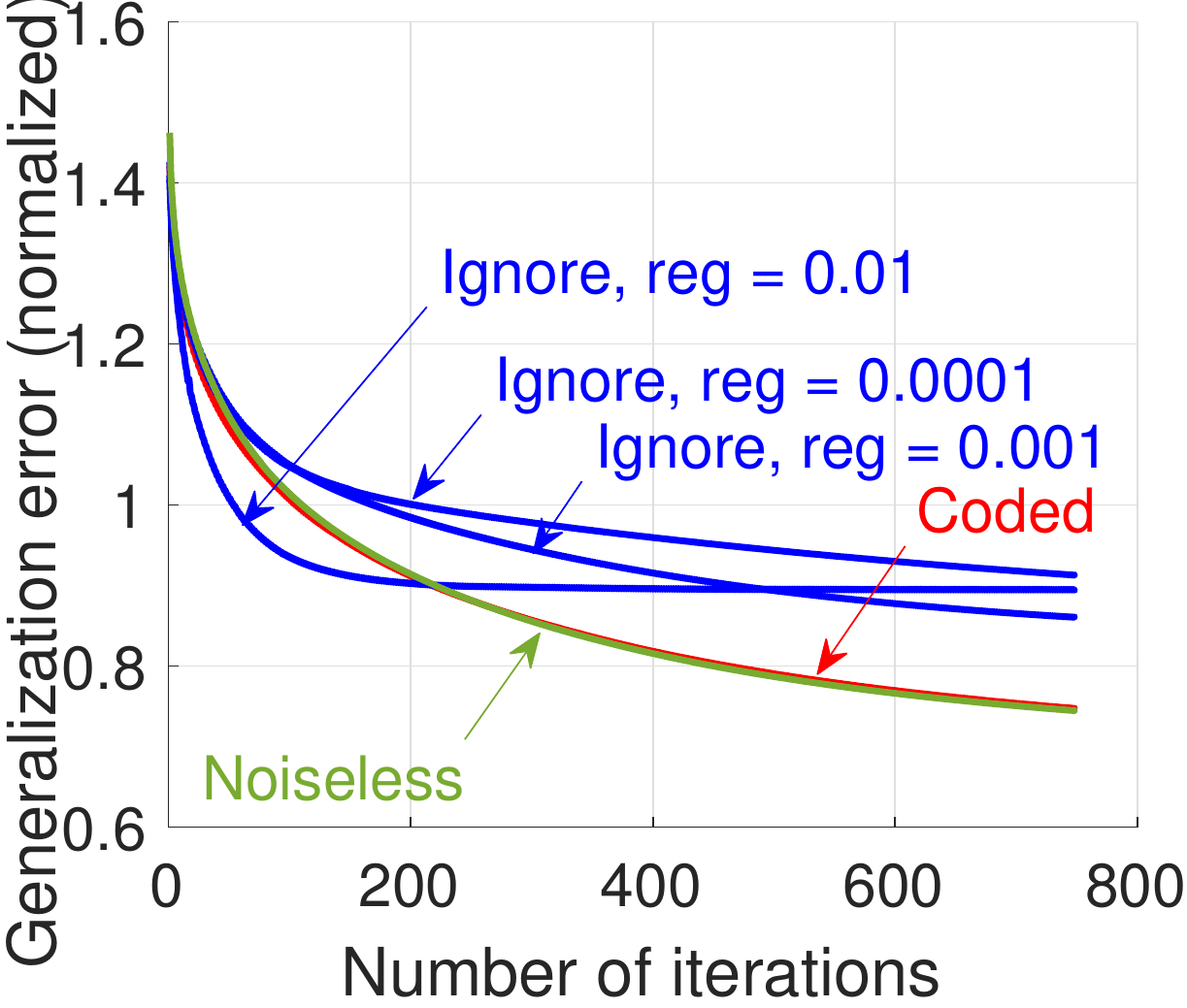}\label{fig:MT3}}
	\caption{Mini-benchmarks experiments (results normalized due to confidentiality). The per-iteration time in \ref{fig:amazon_minitest} and \ref{fig:MT1} includes decoding and communication.\vspace{-2mm}}
\end{figure*}
We start with Figure~\ref{fig:MT1} where we plot the time for each iteration. In theory, when all the workers are present, the computational cost per iteration should be the same as the uncoded case. However, the coded method (all) has slight overhead due to coding and a second round of communication for coded linear regression using matrix-vector multiplications. The coded method (half) shows the cost when only half of the workers are running, which is, as expected, twice the cost of the uncoded method.
In Figure~\ref{fig:MT2} and \ref{fig:MT3} we report the generalization error and we compare our coded elastic computing technique with three baselines, namely noiseless (no failure), ignore the failure and continue, and an existing algorithm called Elastic Distr-BGD \cite{narayanamurthy2013towards}. The coded method can achieve the same convergence behavior as the noiseless case, while the ignore method achieves worst generalization error even for different regularization parameters (the other techniques all use the same regularization\footnote{Note that although a fine-grained grid search is possible to find the best regularization coefficient, it is very time consuming in practice.} coefficient 0.001). In Figure~\ref{fig:MT2}, we show 5 different experiments on Distr-BGD using the same failure probability but different realizations. The convergence of Distr-BGD depends on when a failure occurs and can lead to different algorithm performance. This is because the Distr-BGD keeps using previous gradient vectors at the failed machines, and this can (1) lead to overfitting, and (2) make the optimization miss the minimum point. In the plot of Distr-BGD, the \emph{valley} part is due to overfitting, and the sudden change to near flat loss growth is because when the gradient descent has missed the optimal point of empirical training loss, the fixed gradient at the failure nodes makes the line search choose the smallest step size. In some cases, the Distr-BGD works extremely well because the fixed gradients act like momentum and can improve the speed of convergence.

From the experiment results, we can see that the coded elastic computing technique can obtain the same convergence behavior as ordinary gradient-descent-based algorithms but can elastically allocate the workload based on the number of available machines without moving data around.

\section{Conclusions}
\vspace{-1ex}
The coded elastic computing framework presented in this paper can deal with new cloud offerings where machines can leave and join during the computation.
Our framework handles the elastic events in a positive way, meaning that when machines leave, it shifts the computation to the remaining workers, and when new machines join the computation, it actively reduces the workload of existing machines without the reallocation of data. We prove that the coded elastic computing technique can achieve the same memory-access cost as the noiseless case, and hence is optimal for memory-bound applications. Using experiments in both Amazon EC2 and on a Microsoft multi-tenancy cluster, we show that the coded elastic computing technique can achieve the same convergence behavior as if no failure occurs, and can dynamically adjust working loads respect to the number of remaining workers. The proposed technique can be applied to coded matrix-vector, matrix-matrix multiplications and linear regression, and potentially other applications where the large-scale matrix operations are the bottleneck.
\eat{
The active way of utilizing remaining workers in the proposed technique will be useful in the case of elastic \emph{ramp-up} in which the learning task can start before all the machines are ready \cite{reef,hindman2011mesos}. The task continues when new machines are ready, and the workload at each machine reduces. Although we have not explicitly provided the algorithm that utilizes elastic ramp-up, the current framework already supports this function.
}

\section{Acknowledgment}

We sincerely thank Professor Michael Mahoney, Professor Kannan Ramchandran, Professor Viveck Cadambe, Professor Joseph E. Gonzalez, Vipul Gupta, Swanand Kadhe, Zhewei Yao, Sanghamitra Dutta, and Haewon Jeong for helpful discussions and suggestions. 

\bibliographystyle{abbrv}
\bibliography{bib}

\end{document}